\def\C{\mathbb{C}}
\def\Z{\mathbb{Z}}
\def\M{\mathcal{M}}
\def\diag{\mathrm{diag}}
\def\Id{\mathbf{I}}
\def\Hom{\mathrm{Hom}}
\def\Auth{\mathrm{Aut}}
\def\ep{\epsilon}
\def\d{\partial}
\newcommand{\mf}[1]{\mathfrak{#1}}
\renewcommand{\appendix}[1]{
    \addtocounter{section}{1}
    \setcounter{equation}{0}
    \renewcommand{\thesection}{\Alph{section}}
    \section*{Appendix \thesection\protect\indent #1}
    \addcontentsline{toc}{section}{Appendix \thesection\ \ \ #1}
}
\newcommand\encadremath[1]{\vbox{\hrule\hbox{\vrule\kern8pt
\vbox{\kern8pt \hbox{$\displaystyle #1$}\kern8pt}
\kern8pt\vrule}\hrule}}
\def\enca#1{\vbox{\hrule\hbox{
\vrule\kern8pt\vbox{\kern8pt \hbox{$\displaystyle #1$}
\kern8pt} \kern8pt\vrule}\hrule}}
\newcommand\figureframex[3]{
\begin{figure}[bth]
\hrule\hbox{\vrule\kern8pt
\vbox{\kern8pt \vbox{
\begin{center}
{\mbox{\epsfxsize=#1.truecm\epsfbox{#2}}}
\end{center}
\caption{#3}
}\kern8pt}
\kern8pt\vrule}\hrule
\end{figure}
}
\newcommand\figureframey[3]{
\begin{figure}[bth]
\hrule\hbox{\vrule\kern8pt
\vbox{\kern8pt \vbox{
\begin{center}
{\mbox{\epsfysize=#1.truecm\epsfbox{#2}}}
\end{center}
\caption{#3}
}\kern8pt}
\kern8pt\vrule}\hrule
\end{figure}
}
\newcommand{\beq}{\begin{equation}}
\newcommand{\eeq}{\end{equation}}
\newcommand{\bea}{\begin{eqnarray}}
\newcommand{\eea}{\end{eqnarray}}
\newcommand{\Res}{\mathop{\,\rm Res\,}}
\renewcommand{\l}{\lambda}
\newcommand{\om}{\omega}
\renewcommand{\d}{{{\partial}}}
\newcommand{\Pint}{{\int\kern -1.em -\kern-.25em}}
\renewcommand{\l}{\lambda}
\renewcommand{\l}{\lambda}
\renewcommand{\thesection}{\arabic{section}}
\newcommand{\br}[1]{\left( #1 \right) }
\newcommand{\ba}[1]{\left\langle #1 \right\rangle }
\newtheorem{theorem}{Theorem}[section]
\newtheorem{proposition}[theorem]{Proposition}
\newtheorem{corollary}[theorem]{Corollary}
\newtheorem{lemma}[theorem]{Lemma}
\theoremstyle{definition}
\newtheorem{definition}[theorem]{Definition}
\newtheorem{notation}[theorem]{Notation}
\theoremstyle{remark}
\newtheorem{remark}[theorem]{Remark}
\def\br{\begin{remark}\rm\small}
\def\er{\end{remark}}
\def\bt{\begin{theorem}}
\def\et{\end{theorem}}
\def\bd{\begin{definition}}
\def\ed{\end{definition}}
\def\bp{\begin{proposition}}
\def\ep{\end{proposition}}
\def\bl{\begin{lemma}}
\def\el{\end{lemma}}
\def\bc{\begin{corollary}}
\def\ec{\end{corollary}}
\def\beaq{\begin{eqnarray}}
\def\eeaq{\end{eqnarray}}
\def\CP1{\mathbb{C}\mathrm{P}^1}
\title[Givental formula and topological recursion]
{Identification of the Givental formula with the spectral curve topological recursion procedure}
\author{P.~Dunin-Barkowski}
\author{N.~Orantin}
\author{S.~Shadrin}
\author{L.~Spitz}
\address{P.~D.-B.: Korteweg-de~Vries Institute for Mathematics, University of Amsterdam, P.~O.~Box 94248, 1090 GE Amsterdam, The Netherlands and ITEP, Moscow, Russia}
\email{P.Dunin-Barkovskiy@uva.nl}
\address{N.~O.: CAMGSD, Departamento de Matem\'atica,
Instituto Superior T\'ecnico,
Av. Rovisco Pais,
1049-001 Lisboa, Portugal}
\email{norantin@math.ist.utl.pt}
\address{S.~S.: Korteweg-de~Vries Institute for Mathematics, University of Amsterdam, P.~O.~Box 94248, 1090 GE Amsterdam, The Netherlands}
\email{S.Shadrin@uva.nl}
\address{L.~S.: Korteweg-de~Vries Institute for Mathematics, University of Amsterdam, P.~O.~Box 94248, 1090 GE Amsterdam, The Netherlands}
\email{L.Spitz@uva.nl}
\begin{document}

\begin{abstract} We identify the Givental formula for the ancestor formal Gromov-Witten potential with a version of the topological recursion procedure for a collection of isolated local germs of the spectral curve. As an application we prove a conjecture of Norbury and Scott on the reconstruction of the stationary sector of the Gromov-Witten potential of $\CP1$ via a particular spectral curve.
\end{abstract}

\maketitle

\tableofcontents

\section{Introduction}

\subsection{Givental theory}
Givental theory~\cite{Giv01,Giv01b,Giv04} is one of the most important tools in the study of Gromov-Witten invariants of target varieties and general cohomological field theories that allows, in particular, to obtain explicit relations between the partition functions of different theories, reconstruct higher genera correlators from the genus~$0$ data, and establish general properties of semi-simple theories.

The core of the theory is Givental's formula that gives a formal Gromov-Witten potential associated to a calibrated semi-simple Frobenius structure. Teleman proves \cite{Tel07} that the formal Gromov-Witten potential associated to the calibrated Frobenius structure of a target variety with semi-simple quantum cohomology coincides with the actual Gromov-Witten potential in all genera.

Roughly speaking, to a calibrated Frobenius structure of dimension~$r$ with a chosen semi-simple point $t$ one can associate two $r\times r$ matrix series, $S_t(\zeta^{-1})$ and $R_t(\zeta)$, and $r\times r$ matrices $\Psi_t$ and $\Delta_t$ (the latter one is diagonal), such that for a certain quantization of these matrices we have the following formula for the corresponding Gromov-Witten potential\footnote{The formula as appears here is missing one term corresponding to $g=1, n=0$ that we completely ignore in this paper.}:
\begin{equation}
\hat S_t^{-1} \hat\Psi_t \hat R_t \hat \Delta_t Z_{\mathrm{KdV}}^{\otimes r},
\end{equation}
where by $Z_{\mathrm{KdV}}$ we denote the Kontsevich-Witten tau-function of the KdV hierarchy, that is, the function parametrizing the intersection indices of $\psi$-classes on the moduli space of curves.

\subsection{Topological recursion theory}
The theory developed by Eynard and the second named author (see \cite{EO,EO08}), is a procedure, called topological recursion, that takes the following objects as input. First, a particular Riemann surface, which is usually called the \emph{spectral curve}. Second, two functions $x$ and~$y$ on this surface, and, third, a choice of a bi-differential on this surface, which we will call the \emph{two-point function} (it has often been referred to as the Bergman kernel, but since this term has other uses as well, we refrain from using it in this paper). And, occasionally, a particular extra choice of a coordinate on an open part of the Riemann surface. The output of the topological recursion is a set of $n$-forms $\omega_{g,n}$, whose expansion in this additional coordinate generates interesting numbers.

In some cases these numbers are correlators of a matrix model (that was the original motivation for introducing the topological recursion; it is a natural generalization of the reconstruction procedure for the correlators of a certain class of matrix models, see, e.g. \cite{AMM06a}), in some other cases they appear to be related to Gromov-Witten theory and to various intersection numbers on the moduli space of curves.

Note that this topological recursion is unrelated to the topological recursion occurring in the theory of moduli spaces of curves. Throughout this paper topological recursion is always understood in the above ``matrix model''-related sense. We refer to the whole corresponding theory as the ``topological recursion theory''. We will also sometimes refer to this side of the story as ``the spectral curve side''.

One of the ways to think about the input data of the topological recursion theory is to say that the $(g,n)=(0,1)$ part of a partition function in some geometrically motivated theory determines the spectral curve; the $(g,n)=(0,2)$ part of a partition function determines the two-point function, and the rest of the correlators can be reconstructed from these two via topological recursion, in terms of a proper expansion of $\omega_{g,n}$ (see \cite{DMSS12}).

The topological recursion theory is often used to reproduce known partition functions, extracts from it some higher genus correlators which were up to now unreachable and gives new non-trivial relations for the correlators, see e.~g.~\cite{EO12}.

\subsection{Goals of the paper}


As we see, there is a lot of similarity in both theories (which was first noted by Alexandrov, Mironov and Morozov in \cite{AMM04,AMM06a,AMM06}). In both cases we have to start with a small amount of data fixed in genus zero, and in both cases the intersection indices of $\psi$-classes on the moduli space of curves are a kind of structure constants of the reconstruction procedure (in the case of Givental it is just a part of Givental's formula for the formal Gromov-Witten potential, and in the case of topological recursion it is recovered locally in an expansion near a simple critical point of the spectral curve, see~\cite{Eynintersection1bp}).

Moreover, in both cases we have an expansion of the correlators in terms of Feynman graphs, see~\cite{PSL} on the Givental side and~\cite{Espectralcurve, EO12, KO} on the spectral curve side. So, the natural question is whether we can precisely identify both theories in some setup.

On the Givental side we restrict ourselves to a part of the Givental formula, namely, $\hat R \hat \Delta Z_{\mathrm{KdV}}^{\otimes r}$ (this expression gives the so-called \textit{total ancestor potential}, written in normalized canonical basis). In some sense, it is the most important part of the Givental formula since it determines the underlying Frobenius structure, while the rest of the formula is a linear change of variables (action of the matrix $\hat \Psi$) and a change of calibration rather than of the Frobenius structure itself (action of the matrix series $\hat S^{-1}$). Note that for a cohomological field theory which does not have quadratic terms in the potential, the $S$-action becomes trivial when one takes the origin as the chosen point on the Frobenius manifold. For Gromov-Witten applications, where quadratic terms do appear, the $S$-action is nontrivial, but, together with $\Psi$-action, it amounts to a linear change of variables. This still allows for the correspondence below to be established, as long as one makes a specific choice of coordinates on the topological recursion side. We describe this in detail in the case of the particular example of $\CP1$, see below.

On the topological recursion side we consider a collection of local germs of a spectral curve at a finite number of points, with fixed expansions of the coordinate functions $x$ and $y$ and the two-point function near these points. The result of the topological recursion are local germs of $n$-forms $\omega_{g,n}$ defined on the products of the given germs of the curve, which we expand in a particular basis of forms that also depends on the expansions of the two-point function.

The resulting systems of correlators coincide for consistent choices of the input data in both theories. We prove this fact, essentially using the graphical interpretation of the formulas given in \cite{PSL, Espectralcurve}, and provide a dictionary to translate Givental data into local spectral curve data and vice versa.

Thus, we solve the problem about the mysterious relation between topological recursion and enumerative geometry. Namely, we almost fully complete the program proposed in the thesis of the second named author \cite{OrT} (see also \cite{Or08,DMSS12}) aiming at building a map between a problem of enumerative geometry and the topological recursion setup for some spectral curve. The only issue which is not adressed in our work is the definition of a global spectral manifold which will have to be adressed elsewhere in relation with mirror symmetry and Picard-Lefschetz theory.

As an application we prove the Norbury-Scott conjecture on the stationary sector of the Gromov-Witten invariants of $\CP1$~(\cite{NS}). Namely, we identify the ingredients of their formulas with the matrix series $R$ in the Givental formula for $\CP1$ and show that the expansion of $\omega_{g,n}$ that Norbury and Scott propose exactly reproduces the $\hat S^{-1}\hat \Psi$-part of the Givental formula for the Gromov-Witten potential of $\CP1$.

\subsection{Organization of the paper} The paper assumes some pre-know\-ledge of both Givental and topological recursion theories, and we refer to~\cite{EO08,LeeAx,BCOV} as possible sources. In Section 2 we recall the Givental theory, and present the Givental formula as a sum over graphs. In Section 3 we do the same for the topological recursion. In Section 4 we prove the theorem on identification of the two theories and provide a corresponding dictionary. In Section 5 we provide the computations showing that this identification works for the spectral curve propose by Norbury and Scott and the Gromov-Witten theory of $\CP1$.

\subsection{Acknowledgments} The authors are very grateful to A.~Alexandrov, B.~Dub\-rovin, B.~Eynard, M.~Mulase, P.~Norbury and D.~Zvonkine for useful discussions. N. O. would like to thank the Korteweg-de Vries Institute for its hospitality where part of this work as been carried out. The work of N. O. is partly founded by the Funda\c{c}\~ao para a Ci\^{e}ncia e a Tecnologia through the fellowship SFRH/BPD/70371/2010. S.S. and L.S. were supported by a Vidi grant of the Netherlands Organization for Scientific Research (NWO). P.DB. was supported by NWO free competition grant 613.001.021 and also partially supported by the Ministry of Education and Science of the
Russian Federation under contract 14.740.11.067, by RFBR
grant 12-01-00525, by joint grant 11-01-92612-Royal Society and by the Russian Presidents Grant of Support for the Scientific Schools NSh-3349.2012.2.




\section{Givental group action as a sum over graphs}\label{section:Givental}
In this section we review the Givental group action and we remind the reader how it can be used to write the partition function of an $N$-dimensional semi-simple cohomological field theory as an operator acting on the product of~$N$ KdV $\tau$-functions. Using this, we write the partition function for such a cohomological field theory as a sum over decorated graphs. This is essentially the same as what was done in~\cite{PSL}; in the present paper the contributions are distributed in a slightly different way over the components of the graph to make the comparison with the topological recursion.

\subsection{Givental group action}\label{sec:diff-oper}

We remind the reader of the original formulation, due to Y.-P.~Lee, of the infinitesimal Givental group action in terms of differential operators~\cite{Lee03, Lee08, Lee09}.

Consider the space of partition functions for $N$-dimensional cohomological field theories
\begin{equation}\label{eq:typeFPS}
Z = \exp\left(\sum_{g \geq 0} \hbar^{g-1} \mathcal{F}_g\right)
\end{equation}
in variables $v^{d,i}$, $d\geq 0$, $i=1,\dots,N$. There is a fixed scalar product $\eta_{ij}=\delta_{ij}$ on the vector space $V:=\langle e_1,\dots,e_N\rangle$ of primary fields corresponding to the indices $i=1,\dots,N$. Furthermore, we will denote by $e_{\bf{1}}$ the vector in $V$ that plays the role of the unit.

Later on we will also use the so-called \emph{correlators}
$$
\ba{\tau_{d_1}(e_{i_1})\tau_{d_2}(e_{i_2}) \cdots \tau_{d_k}(e_{i_k})}_g 
$$
which correspond to the coefficients of formal power series $\mathcal{F}_g$ in the following way:
\begin{equation}
\mathcal{F}_g = \sum \frac{\ba{\tau_{d_1}(e_{i_1})\tau_{d_2}(e_{i_2}) \cdots \tau_{d_k}(e_{i_k})}_g}{|\Auth((i_m,d_m)_{m=1}^k)|} t^{d_1, i_1} \cdots t^{d_k, i_k} ,
\end{equation}
where $|\Auth((i_m, d_m)_{m=1}^k)|$ denotes the number of automorphisms of the collection of multi-indices~$(i_m,d_m)$ and where the sum is such that it includes each monomial~$t^{d_1, i_1} \cdots t^{d_k, i_k}$ exactly once. Note that in the special case of a Gromov-Witten theory for some manifold $X$, these correlators carry the following meaning:
\begin{multline}
\ba{\tau_{d_1}(e_{i_1})\tau_{d_2}(e_{i_2}) \cdots \tau_{d_k}(e_{i_k})}_g =\\
\sum_{\mathrm{deg}}\int_{[X_{g,k,\mathrm{deg}}]}ev^*_1(e_{i_1})\psi^{d_1}_1\,ev^*_2(e_{i_2})\psi^{d_2}_2 \cdots ev^*_k(e_{i_k})\psi^{d_k}_k,
\end{multline}
where $[X_{g,k,\mathrm{deg}}]$ is the moduli space of degree $\mathrm{deg}$ stable maps to $X$ of genus-$g$ curves with $k$ marked points, $ev_i$ is the evaluation map at the $i^{\mathrm{th}}$ point and $\psi$ correspond to $\psi$-classes.

Consider a sequence of operators $r_l\in \Hom(V,V)$ for $l\geq 1$, such that the operators with odd (resp., even) indices are symmetric (resp., skew-symmetric). Then we denote by $(r_lz^l)\hat{\ }$ the following differential operator: \begin{align}
(r_lz^l)\hat{\ } := & -(r_l)_{\bf{1}}^i\frac{\d}{\d v^{l+1,i}}
+ \sum_{d=0}^\infty v^{d,i} (r_l)_i^j \frac{\d}{\d v^{d+l,j}}
 \label{eq:Al-quintized} \\
& +\frac{\hbar}{2} \sum_{m=0}^{l-1}(-1)^{m+1} (r_l)^{i,j}\frac{\d^2}{\d v^{m,i}\d
v^{l-1-m,j}}. \notag
\end{align}

Here the indices $i,j \in \{1, \ldots, N\}$ on~$r_l$ correspond to the basis $\{e_1, \ldots, e_N\}$ of~$V$, and the index~$\bf{1}$ corresponds to the unit vector~$e_{\bf{1}}$. When we write~$r_l$ with two upper-indices we mean as usual that we raise one of the indices using the scalar product~$\eta$.

Given such a sequence of operators~$r_l$, we define an operator series~$R(z)$ in the following way
\begin{equation}\label{eq:operatorSeries}
R(z) = \sum_{l= 0}^\infty R_l z^l := \exp\left(\sum_{l=1}^\infty r_l z^l\right) .
\end{equation}
The quantization $\hat{R}$ of this series is defined by
\begin{equation}\label{eq:quantizedOS}
\hat{R} = \exp\left(\sum_{l=1}^\infty \left((-1)^l r_l z^l\right)\hat{\ } \right) .
\end{equation}
Givental observed that the action of such operators~$\hat{R}$ on formal power series~$Z$ for which the number of $\psi$-classes (given by the first index of $v^{d,\mu}$) at any monomial of degree~$n$ is no more than $3g - 3 + n$, is well-defined. The main theorem of~\cite{FabShaZvo06} states that this action preserves the property that $Z$ is a generating function of the correlators of a cohomological field theory with target space~$(V,\eta)$ (see also~\cite{Kaz07,Tel07}).

\begin{remark}
Note that this definition of~$\hat{R}$ differs from the one in~\cite{PSL} by the sign~$(-1)^l$. It is needed here to agree with Givental's notation in Proposition~\ref{prop:Givental-trivial}, cf.~\cite[Proposition 7.3]{Giv01}. For the same reason, in order to agree with the conventions of Givental, we label in a matrix by the upper index the column and by the lower index the row.
\end{remark}

\subsection{Givental operator for a Frobenius manifold} \label{sec:Frobenius}
Let~$Z(\{t^{d,\mu}\})$ be the partition function of some $N$-dimensional semi-simple conformal cohomological field theory. We recall the construction (due to Givental~\cite{Giv01, Giv01b, Giv04}, see also Dubrovin~\cite{Dub98}) of an operator series~$R(z)$ as in the previous section whose quantization takes the product of $N$ KdV $\tau$-functions to~$Z$.

Let~$F$ be the restriction of~$\log(Z)$ to the genus zero part without descendents. Denote $t^{\mu} := t^{0,\mu}$. Then~$F$ can be interpreted as a formal Frobenius manifold with metric
\begin{equation}
\eta_{\alpha \beta} = \frac{\partial^3 F}{\partial t^{\mathbf{1}} \partial t^\alpha \partial t^\beta} 
\end{equation}
and Frobenius algebra structure~$c^\gamma_{\alpha \beta}$
\begin{equation}
c_{\alpha \beta \gamma}= \frac{\partial^3 F}{\partial t^\alpha \partial t^\beta \partial t^{\gamma} } .
\end{equation}

We can assume that $\eta_{\alpha\beta}=\delta_{\alpha+\beta,n+1}$ and $e_{\mathbf{1}}=e_1$. According to~\cite{Dub96} it is always possible by an appropriate choice of these flat coordinates $t^{\mu}$.



\subsubsection{Canonical coordinates} Another set of coordinates is given by the \emph{canonical coordinates} $\{u^i\}$ which can be found as solutions to Equation~(3.54) from~\cite{Dub96}, and have the property that $\{\partial_i:= \partial/\partial u^i\}$ forms a basis of canonical idempotents of the Frobenius algebra product. In these coordinates the metric is the identity matrix and the unit vector field is given by $e_\mathbf{1} = \d_1 + \cdots + \d_N$.

Define $\Delta_i := 1/(\d_i, \d_i)$ to be the inverse of the square of the length of the $i^{\mathrm{th}}$ canonical basis element, and call $\{\d/\d v^i := \Delta_i^{1/2} \d/\d u^i\}$ the normalized canonical basis in the tangent space. We denote the coordinates corresponding to this basis by $v^i$, and the formal variables corresponding to these coordinates by $v^{d,i}$. They are precisely the formal variables $v^{d,i}$ appearing in the previous section.

Let $U$ be the matrix of canonical coordinates $U = \diag(u^1, \ldots, u^N)$ and denote by $\Psi$ the transition matrix from the flat to the normalized canonical bases. That is, denoting $\mathrm{d}t = (\mathrm{d}t^1, \ldots, \mathrm{d}t^N)^{\mathrm{T}}$ and $\mathrm{d}u = (\mathrm{d}u^1, \ldots, \mathrm{d}u^N)^{\mathrm{T}}$, one has
\begin{equation}
\Delta^{-1/2}\mathrm{d}u = \Psi \mathrm{d}t ,
\end{equation}
where $\Delta = \diag(\Delta_1, \ldots, \Delta_N)$.

\begin{remark}
Note that $\Psi$ obtained with the help of the definition above depends on the point $p$ of the Frobenius manifold.
\end{remark}


\subsubsection{Recursion}\label{sub:recursion} Construct an operator series $R(z) = \sum_{k \geq 0} R_k z^k$ as in the previous section in the following way.

Recursively define the off-diagonal entries of~$R_k$ in normalized canonical coordinates by solving the equation
\begin{equation}
\Psi^{-1} \mathrm{d}  (\Psi R_{k-1}) = [\mathrm{d}U, R_k] .
\end{equation}
using~$R_0 = \Id$ as a base case. Construct the diagonal entries of $R_k$ by integrating the next equation
\begin{equation}
\Psi^{-1}  \mathrm{d} (\Psi R_{k}) = [\mathrm{d}U, R_{k+1}]
\end{equation}
using the fact that the diagonal entries of $[\mathrm{d}U,R_{k+1}]$ are equal to zero. To fix the integration constant, use the Euler equation
\begin{equation}
R_k = -(i_E \mathrm{d}R_k)/k ,
\end{equation}
where $E = \sum u^i \d_i$ is the Euler field (here we use the fact that we started with a conformal cohomological field theory).

This procedure recursively defines $R_k$ for all~$k$. The following proposition is essentially proved in Givental's papers~\cite{Giv01, Giv01b}.

\begin{proposition}\label{prop:Givental-trivial}
Let $F$ be a local $N$-dimensional Frobenius manifold structure, semisimple at the origin, and let $(R_k)$ be the series of operators constructed from this $F$ by the recursive procedure described above, at the origin. Let $\Psi$ and $\Delta$ be as above, taken at the origin as well. Then we have the following formula:
\begin{equation}\label{eq:Giv-F-0}
\mathcal{F}_0 = \mathop{\mathrm{Res}}_{\hbar=0} \mathrm{d}\hbar \cdot \log \hat{\Psi}\hat{R}\hat{\Delta} \mathcal{T} .
\end{equation}
Here $\mathcal{F}_0=\mathcal{F}_0(\{t^{d,\mu}\})$ is the genus $0$ descendant potential of cohomological field theory associated to $F$; $\mathcal{T}$ is the product of $N$ KdV tau-functions,
\[
\mathcal{T}:=Z_{\mathrm{KdV}}(\{u^{d,1}\})\cdots Z_{\mathrm{KdV}}(\{u^{d,N}\});
\]
$\hat{\Delta}$ replaces the variables of $i^{\mathrm{th}}$ KdV $\tau$-function according to $u^{d,i} = \Delta_i^{1/2} v^{d,i}$ 
and replaces $\hbar$ with~$\Delta_i \hbar$, while $\hat{\Psi}$ is the change of variables $v^{d,i} = \Psi^i_\nu t^{d,\nu}$. The unit for the $R$-action is given by $(\Psi^1_1,\dots,\Psi^N_1)$.
\end{proposition}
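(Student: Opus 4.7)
The plan is to extract the coefficient of $\hbar^{-1}$ from $\log\hat\Psi\hat R\hat\Delta\mathcal T$ and match it term-by-term to the genus zero descendant potential $\mathcal F_0$ of the Frobenius manifold $F$. First I would dispose of the quantum part: since $\log Z_{\mathrm{KdV}}(\{u^d\}) = \sum_{g\geq 0}\hbar^{g-1}\mathcal F_g^{\mathrm{KdV}}$, one has $\log\mathcal T = \sum_g \hbar^{g-1}\sum_i \mathcal F_g^{\mathrm{KdV}}(\{u^{d,i}\})$, and the $\hat\Delta$ substitutions $u^{d,i}=\Delta_i^{1/2}v^{d,i}$, $\hbar\mapsto\Delta_i\hbar$ preserve the $\hbar$-stratification. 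The $\hbar^{-1}$-piece then becomes $\sum_i \Delta_i^{-1}\mathcal F_0^{\mathrm{KdV}}(\{\Delta_i^{1/2}v^{d,i}\})$, which is precisely the genus zero descendant potential of the trivial $N$-dimensional semisimple cohomological field theory written in normalized canonical coordinates.

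Next I would apply $\hat R$ at this classical level. Only the first-order part of $(r_l z^l)\hat{\ }$ can contribute to the $\hbar^{-1}$-coefficient, because the second-order term in \eqref{eq:Al-quintized} carries an explicit factor of $\hbar$. Therefore $\hat R$ acts on the $\hbar^{-1}$-piece as a classical upper-triangular transformation generated by the first-order operators with coefficients $(R_l)_i^j$. After the linear change of variables $\hat\Psi$ taking normalized canonical variables $v^{d,i}$ to flat variables $t^{d,\nu}$, the claim reduces to verifying that the assembled classical Givental transformation, built from the $R$-matrix of our recursion, sends the trivial-theory genus zero potential to $\mathcal F_0$.

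For this last identification I would invoke the recursion defining $R_k$. The equations $\Psi^{-1}\mathrm d(\Psi R_{k-1}) = [\mathrm d U, R_k]$ are exactly the compatibility conditions coming from flatness of the Dubrovin connection of $F$ in the direction $z$, expanded at $z=\infty$; and the Euler normalization $R_k = -(i_E\mathrm d R_k)/k$, together with the base case $R_0 = \mathrm{Id}$, fixes the diagonal integration constants by conformality. One then expands both sides of \eqref{eq:Giv-F-0} in $t^{d,\nu}$ and matches coefficients inductively in the total number of descendants, using WDVV and the associativity of $c^\gamma_{\alpha\beta}$ as the key structural input; this is the content of~\cite{Giv01,Giv01b}.

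The main obstacle is precisely that last matching step: showing that the recursion defining $R_k$ is exactly what is needed to turn the trivial $N$-dimensional genus zero theory into $F$ under the classical Givental action, and that the Euler-field normalization fixes the correct branch of the diagonal integration constants. This is where the data of the Frobenius structure enters the formula nontrivially; once it is in place, the $\hbar^{-1}$-extraction and the classical first-order nature of $\hat R$ in that order of $\hbar$ make the remaining bookkeeping essentially routine.
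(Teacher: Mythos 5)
There is a genuine gap in your second step. You claim that only the first-order part of $(r_l z^l)\hat{\ }$ can contribute to the $\hbar^{-1}$-coefficient of $\log \hat\Psi\hat R\hat\Delta\mathcal T$ ``because the second-order term in \eqref{eq:Al-quintized} carries an explicit factor of $\hbar$.'' This is false. Writing $Z=\exp\bigl(\sum_g \hbar^{g-1}\mathcal F_g\bigr)$, the second-order operator acts by
\begin{equation*}
\frac{\hbar\,\partial^2 Z/\partial v\,\partial v'}{Z}
=\hbar\left(\frac{\partial^2\log Z}{\partial v\,\partial v'}+\frac{\partial\log Z}{\partial v}\,\frac{\partial\log Z}{\partial v'}\right),
\end{equation*}
and the product of two first derivatives of $\log Z$ contributes $\hbar\cdot\hbar^{-1}\partial\mathcal F_0\cdot\hbar^{-1}\partial\mathcal F_0=\hbar^{-1}\,\partial\mathcal F_0\,\partial\mathcal F_0$, which survives at genus zero. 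This is exactly why, in the graph expansion \eqref{eq:graphs-Givental}, the genus-zero part of the potential is a sum over \emph{trees} with nontrivial edge weights $\mathcal E^{i_1,i_2}_{k_1,k_2}$: an edge carries a factor of $\hbar$ and joins two vertices each carrying $\hbar^{-1}$, so a tree with $V$ vertices and $V-1$ edges contributes at order $\hbar^{-1}$. Consequently the $\hbar^{-1}$-piece is \emph{not} obtained from the trivial-theory genus-zero potential by a classical (first-order, upper-triangular) change of variables, and your reduction of the claim to such a transformation does not go through; the descendant genus-zero correlators of a nontrivial CohFT genuinely require these tree contributions.

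For comparison: the paper does not prove this proposition at all, but attributes it to Givental's papers \cite{Giv01,Giv01b} (``essentially proved'' there), where the formula is established via the quantization formalism for quadratic Hamiltonians and the loop-group interpretation of the $R$-matrix; the recursion $\Psi^{-1}\mathrm d(\Psi R_{k-1})=[\mathrm dU,R_k]$ and the Euler normalization are indeed the flatness of the deformed connection and the conformal homogeneity condition, as you say. Your third paragraph correctly identifies where the Frobenius data enters, but to repair the argument you would need to keep the quadratic term and organize the $\hbar^{-1}$-extraction as a sum over trees (equivalently, work with the genus-zero Lagrangian cone formulation), rather than discarding it.
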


\begin{remark}\label{re:Teleman} In fact, using Teleman's result in~\cite{Tel07}, one has a refined version of Equation~\eqref{eq:Giv-F-0}:
\begin{equation}\label{Teleman}
Z = \hat{\Psi}\hat{R}\hat{\Delta} \mathcal{T} .
\end{equation}
Note that it holds for cohomological field theories. In the Gromov-Witten case, when quadratic terms in the potential cannot be neglected, there appears an additional complication, see the next remark below.
\end{remark}


\begin{remark}
\label{rem:GWcase}
Givental's formula~~\cite{Giv01} for a Gromov-Witten total descendant potential (without the $(g=1,n=0)$-term),
\begin{equation}
Z = \hat{S}^{-1}\hat{\Psi}\hat{R}\hat{\Delta} \mathcal{T},
\end{equation}
also includes the operator $\hat{S}$, given by
\begin{equation}
\label{eq:quantizedOS_S}
\hat{S} = \exp\left(\sum_{l=1}^\infty (s_l z^{-l})\hat{\ }\right),
\end{equation}
where the operators $ (s_l z^{-l})\hat{\ }$ are defined in the following way (see, e.~g., \cite[Section 4.2]{FabShaZvo06}):
\begin{align}
\label{eq:squant}
\sum_{l=1}^\infty (s_lz^{-l})\hat{\ } = &  -(s_1)_{1}^\mu \frac{\d}{\d t^{0,\mu}}
+ \frac1\hbar \sum_{d=0}^{\infty} (s_{d+2})_{1,\mu} \, t^{d,\mu}
\\ 
+ \sum_{
\substack{d=0\\ l=1}
}^\infty
(s_l)_\nu^\mu \, t^{d+l,\nu} \frac{\d}{\d t^{d,\mu}}
&
+ \frac1{2 \hbar} \sum_{
\substack{d_1,d_2 \\ \mu_2,\mu_2}
}
(-1)^{d_1} (s_{d_1+d_2+1})_{\mu_1,\mu_2} \, t^{d_1,\mu_1} t^{d_2,\mu_2}.\notag
\end{align}
Note that formula  \eqref{eq:quantizedOS_S} for the quantization of $S$ differs from the analogous formula \eqref{eq:quantizedOS} for $R$ by a factor of $(-1)^l$ in the exponent, which agrees with the definition in Givental's papers \cite{Giv01b,Giv01}.

The matrices $s_k$ are defined through the following relation:
\begin{equation}
S(z) =\sum_{k=0}^\infty S_k z^{-k}=  \exp\left(\sum_{l=0}^\infty s_l z^{-l}\right),
\end{equation}
where for $S(z)$, taken at a point $p$ of the Frobenius manifold, we have (see \cite{Giv01}), for any points $a$ and $b$ of the Frobenius manifold,
\begin{equation}
(a , b\, S_p):= (a,b)+\sum_{k=0}^{\infty} \left\langle \tau_0(a) \exp\left(\tau_0(p)\right)\tau_k(b)\right\rangle_0 z^{-1-k}.
\end{equation}
Here on the left hand side the brackets stand for the scalar product on the tangent space to the Frobenius manifold at $p$, and we used an identification of the tangent space with the whole Frobenius manifold, since in this case the Frobenius manifold is itself a vector space.
If $p$ is the origin, we have just
\begin{equation}
(a , b\, S):= (a,b)+\sum_{k=0}^{\infty} \left\langle \tau_0(a)\tau_k(b)\right\rangle \ z^{-1-k}.
\end{equation}

Note that this $S$ action is defined in the general case when the total descendant genus $0$ potential is known. For the case when only a Frobenius potential is specified, the choice of $S$ is then called a \emph{calibration} of the Frobenius manifold, see \cite{Giv01b, Dub98} for related details. In the case of cohomological field theory when we disregard quadratic terms, the $S$ action is trivial if $p$ is taken to be the origin.

It turns out that in most of the relevant cases, e.g. for the Gromov-Witten theory of $\CP1$ (see section \ref{s:gwcp1} below), the only relevant term in equation (\ref{eq:squant}) is
$$
\sum_{
\substack{d=0\\ l=1}
}^\infty
(s_l)_\nu^\mu \, t^{d+l,\nu} \frac{\d}{\d t^{d,\mu}},
$$
since $(s_1)_{1}^\mu$ vanishes and all other terms just change the unstable terms in the potential.

This means, that in these cases $\hat{S}^{-1}$ just performs a linear change of formal variables 
$t^{d,\mu}$ in the following way:
\begin{equation}
t^{d,\mu} \mapsto \sum_{m=d}^{\infty}(S_{m-d})^{\mu}_{\nu} t^{m,\nu}.
\end{equation}
\end{remark}

\subsection{Expressions in terms of graphs}\label{sec:expr-graphs}
In~\cite{PSL} the action of an operator series as in equation~\eqref{eq:quantizedOS} is written as a sum over graphs. By Remark~\ref{re:Teleman}, this allows us to construct the potential of any semi-simple conformal cohomological field theory as a sum over graphs. Here we repeat the construction of~\cite{PSL} in a slightly different way that will be more convenient for the comparison with the topological recursion formalism. Furthermore, we also include the action of~$\hat{\Delta}$. It is easy to see that the construction is equivalent to that of~\cite{PSL}.

\begin{notation}\label{not:Gamma}
Let~$\gamma$ be any graph. By a half-edge we mean either a leaf or an edge together with a choice of one of the two vertices it is attached to. By $V(\gamma)$, $E(\gamma)$, $H(\gamma)$ and~$L(\gamma)$ we denote the sets of vertices, edges, half-edges and leaves of~$\gamma$. For any vertex~$v$ of~$\gamma$, denote by~$H(v)$ the set of half-edges connected to~$v$.

Let~$\widetilde{\Gamma}$ be the set of all connected graphs~$\gamma$ together with a choice of disjoint splitting~$L(\gamma)= L^*(\gamma) \coprod L^{\bullet}(\gamma)$,  a labelling of the vertices by pairs $(g,i) \in \mathbb{Z}_{\geq 0} \times \{1, \ldots, N\}$ and a labelling of the elements of $H(\gamma)$ by non-negative integers, such that the label of a leaf in~$L^{\bullet}$ is always greater than one. The elements of~$L^*(\gamma)$ are called \emph{ordinary leaves}, the elements of~$L^{\bullet}$ are called \emph{dilaton leaves}. We denote by~$\Gamma$ the subset of all graphs in~$\widetilde{\Gamma}$ that are \emph{stable}; that is, any vertex labelled~$(0,i)$ for some~$i$ is of valence at least three.

For any graph~$\gamma$ denote by $\mathfrak{g}\colon V(\gamma) \rightarrow \Z_{\geq 0}$ and $\mathfrak{i} \colon V(\gamma) \rightarrow \{1, \ldots, N\}$ the maps that associate to any vertex its first and second label respectively, and by $\mathfrak{k} \colon H(\gamma) \rightarrow \Z_{\geq 0}$ the map that associates to any half-edge its label. Denote by $\mathfrak{v} \colon L(\gamma) \rightarrow V(\gamma)$ the map that associates to each leaf the corresponding vertex, and by $\mathfrak{v_1}, \mathfrak{v_2} \colon E(\gamma) \rightarrow V(\gamma)$ and by $\mathfrak{h}_1, \mf{h}_2\colon E(\gamma) \rightarrow H(\gamma)$ the maps that associate to an edge the first and second vertex, and the corresponding half-edges respectively.
\end{notation}

\begin{remark}
The labels introduced above are used to keep track of different data for the trivial cohomological field theory; $g$ is for the genus, $i$ for the primary field in canonical coordinates and the labeling of the marked half-edges is for the power of $\psi$-class.
\end{remark}

\begin{remark}
As in~\cite{PSL}, edges of a graph in~$\Gamma$ are considered to be oriented (this allows to define the maps~$\mathfrak{v}_1$ and~$\mathfrak{v}_2$ unambiguously); the final result does not depend on the orientation.
\end{remark}

Let $R(z)^i_j$ be the components of the operator series~$R(z)$ in normalized canonical basis as computed in Section~\ref{sub:recursion}. To each part of a graph~$\gamma\in \Gamma$ we assign some polynomial in formal variables $\hbar$ and~$v^{d,i}$. Here~$\hbar$ is used to keep track of the genus, while the first index of~$v^{d,i}$ keeps track of the number of $\psi$-classes and the second index keeps track of the normalized canonical coordinate.

\subsubsection{Leaves}
To each ordinary leaf~$l \in L^*$ marked by~$k$ attached to a vertex marked by the pair~$(g,i)$, we assign
\begin{equation}\label{eq:markedLeaf}
(\mathcal{L}^{*})^i_k(l) := [z^k] \left( \sum_{d\geq 0}\left( (R(-z))^i_j v^{d,j} z^d \right)  \right) ,
\end{equation}
which corresponds to the second term in~\eqref{eq:Al-quintized}.

To a dilaton leaf $\lambda \in L^{\bullet}(\gamma)$ marked by~$k$ attached to a vertex marked by~$(g,i)$ we assign
\begin{equation}\label{eq:unmarkedLeaf}
(\mathcal{L}^\bullet)^i_k(\lambda) := [z^{k-1}] \left(-(R(-z))^i_\mathbf{1} \right)  ,
\end{equation}
which corresponds to the first term in~\eqref{eq:Al-quintized}, which is called the \emph{dilaton shift}.

\subsubsection{Edges}
To an edge~$e$ connecting a vertex~$v_1$ marked by $(g_1,i_1)$ to a vertex~$v_2$ marked by~$(g_2,i_2)$ and with markings $k_1$ and~$k_2$ at the corresponding half-edges, we assign 
\begin{equation}
\mathcal{E}^{i_1,i_2}_{k_1,k_2}(e) := [z^{k_1} w^{k_2}] \left( \hbar \cdot \frac{\delta^{i_1 i_2} - \sum_s(R(-z))^{i_1}_s (R(-w))^{i_2}_s}{z+w}\right) .
\end{equation}
Note that this does not depend on the choice of ordering of the vertices and that it follows from the fact that $R(z)$ can be written as $R(z) = \exp(\sum r_l z^l)$ that the numerator on the right-hand side is equal to the product of~$(z+w)$ with some power series in $z$ and~$w$, so this definition makes sense.

\subsubsection{Vertices}
Let~$v$ be a vertex marked by~$(g,i)$ with~$n$ ordinary leaves and half-edges attached to it labelled by $k_1, \ldots, k_n$ and~$m$ more dilaton leaves labelled by $k_{n+1}, \ldots, k_{n+m}$. Then we assign to~$v$
\begin{equation}\label{eq:vertices}
\mathcal{V}^{(g,i)}_{\{k_1, \ldots, k_{n+m}\}}(v) := \hbar^{g-1} \int_{\bar{\mathcal{M}}_{g,n+m}} \psi_1^{k_1} \cdots \psi_{n+m}^{k_{n+m}} .
\end{equation}


\subsubsection{$Z$ as a sum over graphs}
It is easy to see that the sum over all graphs in~$\Gamma$ of the product of the contributions described above, weighted by the inverse order of the automorphism group of the graph, is equal to the graph-sum described in~\cite{PSL} (the only difference is that now we have specialized to the action on the trivial cohomological field theory, leading to $\psi$-class integrals~\eqref{eq:vertices} as vertex contributions).  Thus, we recover the partition function~$Z$ of the cohomological field theory we started with as a sum over~$\Gamma$:
\begin{multline}\label{eq:graphs-Givental}
(\hat{R}\hat{\Delta} \mathcal{T})(\{v^{d,j}\}) = \sum_{\gamma \in \Gamma} \frac{1}{|\Auth(\gamma)|} \\
\prod_{v \in V(\Gamma)} \hbar^{\mathfrak{g}(v)-1} (\Delta_{\mf{i}(v)})^{\frac{1}{2}(2\mf{g}(v) -2 +\mathrm{val}(v))} 
\left\langle \prod_{h \in H(v)} \tau_{\mathfrak{k}(h)}
\right\rangle_\mathfrak{g} \\
\prod_{e \in E(\gamma)} \mathcal{E}^{\mathfrak{i}(\mathfrak{v}_1(e)),\mathfrak{i}(\mathfrak{v}_2(e))}_{\mathfrak{k}(\mathfrak{h}_1(e)),\mathfrak{k}(\mathfrak{h}_2(e))}(e)
\prod_{l \in L^*(\gamma)} (\mathcal{L}^*)^{\mathfrak{i}(\mathfrak{v}(l))}_{\mathfrak{k}(l)}(l)
\prod_{\lambda \in L^{\bullet}(\gamma)} (\mathcal{L}^\bullet)^{\mathfrak{i}(\mathfrak{v}(l))}_{\mathfrak{k}(l)}(\lambda).
\end{multline}

\section{Topological recursion}\label{section:EO}

In this section, we define a local version of the topological recursion and write the corresponding invariants as a sum over graphs, which allows us to compare it to the Givental action in the next section.

\subsection{Local topological recursion}

We define a local version of the topological recursion in the following way. The term local refers to the fact that the data are all defined locally around the canonical coordinates without any reference to the possible existence of a global manifold where these functions can be defined.

\bd
For $N \in \mathbb{N}^*$, we call times a set of $N$ families of complex numbers $\left\{h_{k}^{i}\right\}_{k \in \mathbb{N}}$ for $i=1,\dots,N$ and jumps another set of $N \times N$ infinite families of complex numbers $\left\{B_{k,l}^{i,j}\right\}_{(k,l) \in \mathbb{N}^2}$ for $i,j = 1 , \dots, N$. We finally define a set of canonical coordinates $\left\{a_i\right\}_{i=1}^{N} \in \mathbb{C}^{N}$ subject to $a_i \neq a_j$ for $i \neq j$.

For all $i, j \in \{1, \ldots, N\}$, we define the following set of analytic functions and differential forms in a neighborhood of $0 \in \C$:
\beq
x^{i}(z):= z^2 + a_i \, , \quad y^{i}(z):= \sum_{k=0}^\infty h_k^{i} z^k
\eeq
and
\beq
{B}^{i,j}(z,z') = \delta_{i,j} {dz \otimes dz' \over \left(z-z'\right)^2} +  \sum_{k,l=0}^\infty B_{k,l}^{i,j} z^k z'^l dz  \otimes dz'.
\eeq

For $2g-2+n>0$, we define the genus $g$, $n$-point correlation functions $\om_{g,n}^{i_1,\dots,i_n}(z_1,\dots z_n)$  recursively by
\begin{multline}
\om_{g,n+1}^{i_0,i_1,\dots,i_n}(z_0,z_1,\dots, z_n) := \sum_{j=1}^N \Res_{z \to 0}
{\int_{-z}^{z} B^{i_0,j}(z_0,\cdot) \over 2 \left(y^{j}(z)-y^{j}(-z)\right) \mathrm{d}x^{j}(z)} \\
\left( \om_{g-1,n+2}^{j,j,i_1,\dots,i_n}(z,-z,z_1,\dots, z_n) + \phantom{\sum_{A\cup B = \{1,\dots, n\}} \sum_{h=0}^g 
} \right. \\ 
 \left. \sum_{A\cup B = \{1,\dots, n\}} \sum_{h=0}^g 
\om_{h,\left|A\right|+1}^{j,\mathbf{i}_A}(z,{\bf z}_A)
\om_{g-h,\left|B\right|+1}^{j,\mathbf{i}_B}(-z,{\bf z}_B)
\right) ,
\end{multline}
where for any set~$A$, we denote by $\mathbf{z}_A$ (resp., $\mathbf{i}_A$) the set~$\{z_k\}_{k \in A}$ (resp., $\{i_k\}_{k\in A}$), and where the base of the recursion is given by
\beq
\om_{0,1}^{i}(z) := 0;
\qquad
\om_{0,2}^{i,j}(z,z') := B^{i,j}(z,z') .
\eeq

For convenience, in the sequel we denote 
\begin{equation}
K^{i,j}(z,z') = { \int_{-z}^{z} B^{i,j}(z', \cdot) \over 2(y^j(z) - y^j(-z)) d x^j(z) } 
\end{equation}
and
\begin{equation}
\omega_{g,n}(\vec{z}) = \sum_{\vec{i}} \omega^{\vec{i}}_{g,n}(\vec{z}) \ , 
\end{equation}
where the length of $\vec{z}$ and~$\vec{i}$ is~$n$. 
\ed

\subsection{Correlation functions and intersection numbers}

The correlation functions built by this topological recursion can actually be written in terms of intersection of $\psi$ classes on the moduli space of Riemann surfaces. This result is a slight generalization of \cite{Eynintersection1bp,Espectralcurve} to the local topological recursion.

\subsection{One-branch point case}

The link between the topological recursion formalism and intersection numbers on the moduli space of Riemann surfaces comes from the application of this formalism to the Airy curve. This case corresponds to $N = 1$ and:
\beq
x(z) = z^2 +a\, , \quad y(z) = z \quad \hbox{and} \quad
{B}(z,z') = {dz \otimes dz' \over \left(z-z'\right)^2} .
\eeq

\begin{remark}
Since there is only one branch point in this case, i.e. $N=1$, we omit the superscript indicating which branch point we consider in the notations of this section.
\end{remark}

For further convenience, we introduce two additional parameters by considering the curve
\beq
x(z) = z^2 +a\, , \quad y(z) = \alpha z \quad \hbox{and} \quad
{B}(z,z') = \beta {dz \otimes dz' \over \left(z-z'\right)^2} ,
\eeq
the usual Airy curve being $\alpha = \beta = 1$. In this case, the topological recursion reads
\begin{multline}
\om_{g,n+1}(z_0,z_1,\dots, z_n) := \Res_{z \to 0}
{\beta \over 2 \alpha} {dz_0 \over 2 z \, dz} {1 \over (z_0^2 - z^2)}  \\
\left( \om_{g-1,n+2}(z,-z,z_1,\dots, z_n) + \phantom{\sum_{A\cup B = \{1,\dots, n\}} \sum_{h=0}^g 
} \right. \\ 
 \left. \sum_{A\cup B = \{1,\dots, n\}} \sum_{h=0}^g 
\om_{h,\left|A\right|+1}(z,{\bf z}_A)
\om_{g-h,\left|B\right|+1}(-z,{\bf z}_B)
\right)
\end{multline}
and one has
\bl
The correlation functions of the Airy curve can be expressed in terms of intersection numbers:
\begin{multline}
\om_{g,n}(z_1,\dots,z_n) = 
\\
\left(-{\beta \over 2 \alpha} \right)^{2g+n-2} \, \beta^{g+n-1} \sum_{\alpha_1,\dots,\alpha_n \geq 0} \left< \tau_{a_1} \dots \tau_{a_n}\right>_{g,n} \prod_{i=1}^n {(2 \alpha_i+1)!! \, dz_i \over z_i^{2 \alpha_i+2}}.
\end{multline}

\el
This lemma was proved many times by direct computation \cite{EynardMumford, Eynintersection1bp,EO08,Zhou}, matching the topological recursion with the recursive definition of the intersection numbers.

As a side note, the first few correlation functions are
\beq
\om_{0.3}(z_1,z_2,z_3) = -{\beta^3 \over 2 \alpha} \prod_{i=1}^3 {dz_i \over z_i^2},
\eeq
\beq
\om_{0,4}(z_1,z_2,z_3,z_4) = {\beta^5 \over 4 \alpha^2}  \prod_{i=1}^4 {dz_i \over z_i^2} \, \sum_{i=1}^4 {3 \over z_i^2},
\eeq
\beq
\om_{1,1}(z) = {-\beta^2 \over 2 \alpha} {dz \over 8 z^4}
\eeq
and
\beq
\om_{1,2}(z_1,z_2) = {\beta^4 \over 4 \alpha^2} {dz_1 \, dz_2 \over 8 z_1^2 z_2^2} \left({5 \over z_1^4}+ {5 \over z_2^4} + {3 \over z_1^2 z_2^2}\right).
\eeq

\br\label{ref:conventions}
It is important to remark that there exist different conventions in the literature for defining the topological recursion, mainly differing by a change of sign of the recursion kernel. The latter can be recovered by a change of sign $\alpha \to - \alpha$.

%

\er

Let us now consider a deformation of the Airy curve which we will refer to as the KdV curve in the following. It has only one branch point, $N=1$, and reads
\beq
\left\{ \begin{array}{l}
x(z) = z^2 + a_i \cr
y(z) = \alpha {\displaystyle \sum_{k=1}^\infty} h_k z^k \cr
B(z,z') = \beta B_{\mathrm{KdV}}(z,z') = \beta  {dz \otimes dz' \over (z-z')^2} \cr
\end{array} \right. .
\eeq
The corresponding correlation functions can also be expressed in terms intersection numbers as follows:
\bl
The correlation functions of the KdV curve read:
\begin{multline}
\om_{g,n}(z_1,\dots,z_n)= 
 \left(-{\beta \over 2 \alpha h_1} \right)^{2g+n-2} \, \beta^{g+n-1} \sum_{m=0}^\infty {\left(-1\right)^m \over m!} \\ \sum_{\vec{\alpha} \in \mathbb{N^*}^m} \prod_{k=1}^m (2 \alpha_k+1)!! {h_{2\alpha_k+1} \over h_1} 
\prod_{i=1}^n {(2 d_i+1)!! \, dz_i \over z_i^{2 d_i+2}}  
\left< \prod_{j=1}^n \tau_{d_j} \,   \prod_{k=1}^m \tau_{\alpha_k+1} \right>_{g,n+m} .
\end{multline}

\el

\begin{proof}
Once again the proof can be found in the literature \cite{EynardMumford, EO-WP, Eynintersection1bp}. However, let us study a graphical interpretation of  this result when considering an arbitrary convention for the topological recursion.
For $f(z)$ an analytic function around $z \to 0$ and $\left\{T_k\right\}_{k \in \mathbb{Z}}$ a set of parameters, one can compute
\begin{multline}
 \Res_{Z_1 \to 0} \Res_{Z_2 \to 0} K(Z_1,z) \left\{\left(\sum_{k\geq 1} T_k Z_1^k\right) dZ_1  \, K(Z_2,-Z_1) f(Z_2) \left[dZ_2\right]^2 \right. \\
 \left. - \left(\sum_{k\geq 1} T_k (-Z_1)^k\right) dZ_1 \, K(Z_2,Z_1) f(Z_2) \left[dZ_2\right]^2 \right\}
\end{multline}
where the recursion kernel is the one of the Airy curve, i.e. the one for which $h_k=0$ for $k \geq 2$:
\beq
K(z,z_0) = {\beta \over 2 \alpha h_1} {dz_0 \over 2 z \, dz} {1 \over (z_0^2 - z^2)}.
\eeq
One can move the integration contours to get
\beq
 \Res_{Z_1 \to 0} \Res_{Z_2 \to 0} =  \Res_{Z_2 \to 0} \Res_{Z_1 \to 0} +  \Res_{Z_2 \to 0} \Res_{Z_1 \to Z_2} +  \Res_{Z_2 \to 0} \Res_{Z_1 \to -Z_2} .
\eeq


The first term of the right hand side vanishes since the integrand does not have any pole at $Z_1 \to 0$. Let us now compute one of the other two terms:
\begin{multline}
 \Res_{Z_2 \to 0} \Res_{Z_1 \to Z_2} K(Z_1,z) \left(\sum_{k\geq 1} T_k Z_1^k dZ_1\right)  \, K(Z_2,-Z_1) f(Z_2) \left[dZ_2\right]^2 = \\
-  \Res_{Z_2 \to 0} {\beta \over 2 \alpha h_1}  {dZ_2  \over 2 Z_2 } f(Z_2) \Res_{Z_1 \to Z_2}   {dz \over 2 Z_1 } {1 \over (z^2 - Z_1^2)} {\beta \over 2 \alpha h_1} {1 \over (Z_1^2 - Z_2^2)}  \left(\sum_{k\geq 1} T_k Z_1^k dZ_1\right)  = \\
 =  - \Res_{Z_2 \to 0} {\beta \over 2 \alpha h_1}  {dZ_2  dz \over 2 Z_2 } f(Z_2) {1 \over (z^2 - Z_2^2)} {\beta \over 2 \alpha h_1} \sum_{k\geq 1} {T_k \over 4} Z_2^{k-2} .
\end{multline}
In the same way, 
\begin{multline}
 \Res_{Z_2 \to 0} \Res_{Z_1 \to - Z_2} K(Z_1,z) \left(\sum_{k\geq 1} T_k Z_1^k dZ_1\right)  \, K(Z_2,-Z_1) f(Z_2) \left[dZ_2\right]^2 = \\
 = - \Res_{Z_2 \to 0} {\beta \over 2 \alpha h_1}  {dZ_2  dz \over 2 Z_2 } f(Z_2) {1 \over (z^2 - Z_2^2)} {\beta \over 2 \alpha h_1} \sum_{k\geq 1} {T_k \over 4} (-Z_2)^{k-2} .
\end{multline}
The sum of these two terms reads
\begin{multline}
 \Res_{Z_2 \to 0} \Res_{Z_1 \to \pm Z_2} K(Z_1,z) \left(\sum_{k\geq 1} T_k Z_1^k dZ_1\right)  \, K(Z_2,-Z_1) f(Z_2) \left[dZ_2\right]^2 = \\
 = -  \Res_{Z_2 \to 0} {\beta \over 2 \alpha h_1}  {dZ_2  dz \over 2 Z_2 } f(Z_2) {1 \over (z^2 - Z_2^2)} {\beta \over 2 \alpha h_1} \sum_{k\geq 1} {T_{2k} \over 2} (Z_2)^{2k-2}
 \end{multline}
and finally:
\begin{multline}
 \Res_{Z_1 \to 0} \Res_{Z_2 \to 0} K(Z_1,z) \left\{\left(\sum_{k\geq 1} T_k Z_1^k\right) dZ_1  \, K(Z_2,-Z_1) f(Z_2) \left[dZ_2\right]^2 \right. \\
 \left. - \left(\sum_{k\geq 1} T_k (-Z_1)^k\right) dZ_1 \, K(Z_2,Z_1) f(Z_2) \left[dZ_2\right]^2 \right\} = \\
 = \Res_{Z_2 \to 0} {\beta \over 2 \alpha h_1}  {dZ_2  dz \over 2 Z_2 } f(Z_2) {1 \over (z^2 - Z_2^2)}\left(- {\beta \over 2 \alpha h_1}\right) \sum_{k\geq 1} {T_{2k} } (Z_2)^{2k-2}.
\end{multline}

On the other hand, plugging in the times $h_k$ amounts to computing similar quantities:
\begin{multline}
\Res_{z \to 0}
{\beta \over 2 \alpha h_1} {dz_0 \over 2 z \, dz} {1 \over (z_0^2 - z^2)} {1 \over \left(1 + {\displaystyle \sum_{k=1}^\infty} {h_{2k+1} \over h_1} z^{2k} \right)} f(z) \left[dz\right]^2 = \\
= \Res_{z \to 0}
{\beta \over 2 \alpha h_1} {dz_0 \over 2 z \, dz} {1 \over (z_0^2 - z^2)}  f(z) \left[dz\right]^2 (1 - {\displaystyle \sum_{k=1}^\infty} {h_{2k+1} \over h_1} z^{2k} +\left[ {\displaystyle \sum_{k=1}^\infty} {h_{2k+1} \over h_1} z^{2k}\right]^2 + \dots )
\end{multline}
The first term of this sum is the Airy recursion kernel. The second one is of the shape of the preceding one with $T_{2k+2} = { 2 \alpha h_{2k+1}\over h_1}$ for $k \geq 1$ so that:
\begin{multline}
- \Res_{z \to 0}
{\beta \over 2 \alpha} {dz_0 \over 2 z \, dz} {1 \over (z_0^2 - z^2)}  f(z) \left[dz\right]^2   {\displaystyle \sum_{k=1}^\infty} {h_{2k+1} \over h_1} z^{2k}  = \\
=  \Res_{Z_1 \to 0} \Res_{Z_2 \to 0} K(Z_1,z_0) \left\{g(Z_1) dZ_1  \, K(Z_2,-Z_1) f(Z_2) \left[dZ_2\right]^2 \right. \\
 \left. - g(-Z_1) dZ_1 \, K(Z_2,Z_1) f(Z_2) \left[dZ_2\right]^2 \right\}
\end{multline}
where
\beq
g(z) := \sum_{k \geq 1}  { 2 \alpha h_{2k+1} \over \beta h_1} z^{2k+2}.
\eeq
This same procedure can be applied to the other terms of the sum. The $k^{\mathrm{th}}$ order term can be written as a sequence of $k+1$ residues computed with the Airy recursion kernel with $g(z) dz$ on one of the outgoing legs. This computation shows that introducing non-vanishing times amounts to introducing a non-vanishing $\om_{0,1}(z):=g(z) dz$ in the topological recursion.

It is often useful to represent the topological recursion in a graphical form by representing the interaction kernel $K(z,z_0)$ by an edge oriented from $z_0$ towards a trivalent vertex labeled by $z$ and the function $\om_{0,2}(z_1,z_2)$ by a non-oriented edge (see \cite{EO} for more details about this set of  graphs). In this form, $\om_{g,n}(z_1,\dots,z_n)$ is a sum over trivalent graphs of genus $g$ with $n$ leaves labeled by the arguments $z_1,\dots,z_n$. The preceding computation shows that the correlation functions of the KdV curve can be obtained from the correlation functions of the Airy curve by introducing a set of new leaves, called dilation leaves, in the definition of the graphs used. A dilation leave decorated by a label k is weighted by
\beq
(2d-1)!! \, \Res_{z \to 0} g(z) {dz \over z^{2d+1}} = (2d-1)!! \,  { 2 \alpha h_{2d-1} \over \beta } .
\eeq

%
%
Plugging this expression into the formula for the Airy correlation functions proves the result.
\end{proof}

\subsubsection{General case} In this section we give a formula for the correlation function of the local topological recursion.

\bd
Let $\Gamma_{g,n}$ be the subset of~$\Gamma$ (see Notation~\ref{not:Gamma}) consisting of graphs of genus~$g'$ such that $g' + \sum_{v \in V(\Gamma)} \mf{g}(v) = g$ and with~$n$ ordinary leaves.
%
%
%
%
%
%

%
\ed

\bt \label{thm:EOgraphsum}
The correlation functions can be written as a sum over decorated graphs whose vertices are weighted by intersection of $\psi$-classes on $\overline{\M}_{g,n}$, edges by the jumps, ordinary leaves by primitives of~$B$ and dilaton leaves by the times. 

For $2-2g-n<0$, one has 
\begin{multline}\label{eq:graph-EO}
\om_{g,n}(\vec{z}) = \sum_{\gamma \in \Gamma_{g,n}}
 \prod_{v \in V(\gamma)} \left(- 2 h^{\mf{i}(v)}_1\right)^{\chi_{\mf{g}(v),\mathrm{val}(v)}} 
 \left\langle\prod_{h \in H(v)} \tau_{\mf{k}(h)}  \right\rangle_{\mf{g}(v),\mathrm{val}(v)} \\
\prod_{e \in E(\gamma)} \check{B}_{\mf{k}(\mf{h}_1(e)),\mf{k}(\mf{h}_2(e))}^{\mf{i}(\mf{v}_1(e)),\mf{i}(\mf{v}_2(e))} 
 \prod_{l \in L^*(\gamma)} \sum_{j=1}^N \mathrm{d}\xi_{\mf{k}(l)}^{\mf{i}(\mf{v}(l))}(z_j,j) \, \prod_{\lambda \in L^{\bullet}(\gamma)} \check{h}_{\mathfrak{k}(\lambda)}^{\mf{i}(\mf{v}(\lambda))} 
\end{multline}
with 
\beq
\check{h}_k^{i}:=  2(2k - 1)!!  h_{2k - 1}^{i}  ,
\eeq
\beq
d\xi_d^{i}(z_\alpha,j):=\Res_{z \to 0} {(2d+1)!! dz \over z^{2d+2}} \int^z B^{i,j}(z,z_\alpha),
\eeq
\beq
\check{B}_{d_1,d_2}^{i,j} := B_{2d_1,2 d_2}^{i,j} \; (2d_1-1)!! \; (2 d_2 -1)!! 
\eeq
and
\beq
\left\langle\prod_{i=1}^n \tau_{k_i} \right\rangle_{g,n}:= \int_{\overline{\M}_{g,n}} \psi_1^{k_1} \psi_2^{k_2} \dots \psi_n^{k_n}.
\eeq

\et

\begin{proof}
The proof is very similar to the one presented in \cite{Espectralcurve,KO}. However, we prefer to present a completely graphical proof so that the link with the next sections becomes clear.

We follow the proof of \cite{Espectralcurve}. From the definition, one can write the correlation functions as a sum over graphs with oriented and non-oriented arrows linking trivalent vertices resulting in the following expression:
\beq
\om_{g,n}^{\vec{i}}(\vec{z}) = \sum_{G \in \widehat{G}_{g,n}} \om(G)
\eeq
with $ \widehat{G}_{g,n}$ the set of genus $g$ trivalent graphs with one root and $n-1$ leaves labeled by the arguments $z_i$ and a skeleton tree of oriented edges pointing from the root towards the leaves weighted by
\bea
\om(G) &=& \vec {\displaystyle \prod_{v \in V(G)} } \Res_{Z_v \to 0} \; \prod_{e \in E_{\mathrm{oriented}}(G)} K^{i(v_1(e)),i(v_2(e))}(Z_{v_1(e)},Z_{v_2(e)}) \cr
&& \;  \prod_{e \in E_{\mathrm{unoriented}}(G)} B^{i(v_1(e)),i(v_2(e))}(Z_{v_1(e)},Z_{v_2(e)})\cr
\eea
where each leaf is considered as a one-valent vertex $v$ and one denotes $Z_v$ the variable $z_i$ associated to this leaf in the correlation function, $E_{\mathrm{oriented}}(G)$ is the set of oriented leaves of $G$ and $E_{\mathrm{unoriented}}(G)$ is the set of unoriented leaves of $G$ (see \cite{EO} for further details). The product of residues $\vec {\displaystyle \prod_{v \in V(G)} } \Res_{Z_v \to 0}$ is oriented following the arrows, i.e. one first computes the residue corresponding to the end of an arrow before the one associated to its root.

It is useful to remark, that, for any edge, oriented or not, one has two types of contributions. Indeed, the functions $B^{i,j}(z,z')$ have a singular part
\beq
B_{\mathrm{KdV}}^{i,j}(z,z'):=\delta_{i,j}  {dz \otimes dz' \over (z-z')^2}
\eeq
and a regular part
\beq
B_{\mathrm{reg}}^{i,j}(z,z') := \sum_{k,l=0}^\infty B_{k,l}^{i,j} z^k z'^l dz  \otimes dz'
\eeq
when $z \to z'$:
\beq
B^{i,j}(z,z') = B_{\mathrm{KdV}}^{i,j}(z,z') + B_{\mathrm{reg}}^{i,j}(z,z').
\eeq
In the same way, one has
\beq
K^{i,j}(z,z') = K_{\mathrm{KdV}}^{i,j}(z,z') + K_{\mathrm{reg}}^{i,j}(z,z').
\eeq

One can translate this by representing $B_{\mathrm{KdV}}^{i,j}(z,z')$ (resp.  $B_{\mathrm{reg}}^{i,j}(z,z')$) by dashed (resp. dotted) unoriented edges and $K_{\mathrm{KdV}}^{i,j}(z,z')$ (resp.  $K_{\mathrm{reg}}^{i,j}(z,z')$) by dashed (resp. dotted) oriented edges form $z$ to~$z'$. The preceding sum is thus transformed into a sum over graphs where the edges are dotted or dashed and weighted accordingly.

The dashed edges can be expressed in a slightly different way. Indeed, one has
\beq
B_{\mathrm{reg}}^{i,j}(z,z') = \Res_{z_1 \to z}  \Res_{z_2 \to z'} B_{\mathrm{KdV}}^{i,i}(z,z_1) \, \left[\int^{z_1} \int^{z_2}  B_{\mathrm{reg}}^{i,j}(z_1,z_2)\right] \, B_{\mathrm{KdV}}^{j,j}(z_2,z')
\eeq
and
\beq
K_{\mathrm{reg}}^{i,j}(z,z') = \Res_{z_1 \to z}  \Res_{z_2 \to \pm z'} B_{\mathrm{KdV}}^{i,i}(z,z_1) \, \left[\int^{z_1} \int^{z_2}  B_{\mathrm{reg}}^{i,j}(z_1,z_2)\right] \, K_{\mathrm{KdV}}^{j,j}(z_2,z')
\eeq
by a simple application of the Cauchy formula. 



Remember that such an edge comes with integration of its boundary variables, thus, one typically has to compute
\beq\label{eqcluster}
\Res_{z \to 0} \Res_{z' \to 0} g(z) K_{\mathrm{reg}}^{i,j}(z,z')  f(z')
\eeq
which reads
\beq
\Res_{z \to 0} \Res_{z_1 \to z}  \Res_{z' \to 0} \Res_{z_2 \to \pm z'} g(z) B_{\mathrm{KdV}}^{i,i}(z,z_1) \, \left[\int^{z_1} \int^{z_2}  B_{\mathrm{reg}}^{i,j}(z_1,z_2)\right] \, K_{\mathrm{KdV}}^{j,j}(z_2,z') f(z').
\eeq
One can move the integration contours around $0$ thanks to:
\beq
\Res_{z \to 0} \Res_{z_1 \to  z}  = \Res_{z_1 \to 0} \Res_{z \to 0}  - \Res_{z \to 0} \Res_{z_1 \to 0}
\eeq 
and
\beq
 \Res_{z' \to 0} \Res_{z_2 \to \pm z'} = \Res_{z_2 \to 0} \Res_{z' \to 0}  - \Res_{z' \to 0} \Res_{z_2 \to 0}.
 \eeq
Since, the integrant does not have any pole as $z_1 \to 0$ nor $z_2 \to 0$, this shows that \ref{eqcluster} is equal to
\beq
\Res_{z_1 \to 0} \Res_{z \to 0}  \Res_{z_2 \to 0} g(z) \,  B_{\mathrm{KdV}}^{i,i}(z,z_1) \, \left[\int^{z_1} \int^{z_2}  B_{\mathrm{reg}}^{i,j}(z_1,z_2)\right] \, \Res_{z' \to 0} \; K_{\mathrm{KdV}}^{j,j}(z_2,z')\, f(z') .
\eeq

In the same way, one gets that
\beq
\Res_{z \to 0} \Res_{z' \to 0} g(z) B_{\mathrm{reg}}^{i,j}(z,z')  f(z')
\eeq
is equal to 
\beq
\Res_{z_1 \to 0} \Res_{z \to 0}  \Res_{z_2 \to 0} g(z) \,  B_{\mathrm{KdV}}^{i,i}(z,z_1) \, \left[\int^{z_1} \int^{z_2}  B_{\mathrm{reg}}^{i,j}(z_1,z_2)\right] \, \Res_{z' \to 0} \; B_{\mathrm{KdV}}^{j,j}(z_2,z')\, f(z') .
\eeq
One can finally proceed in a similar way for re-expressing the weights of the root and the leaves by writing\footnote{Remark that, for the roots and leaves, in opposition to the inner edges, the functions are the full ones, not just the regular part.}
\beq
 \Res_{z' \to 0} {K}^{i,j}(z,z')  f(z')
 =  \Res_{z_2 \to 0} \, \left[ \int^{z_2}  {B}^{i,j}(z,z_2)\right] \, \Res_{z' \to 0} \; K_{\mathrm{KdV}}^{j,j}(z_2,z')\, f(z') 
\eeq
and
\beq
\Res_{z \to 0} g(z) {B}^{i,j}(z,z')   = 
\Res_{z_1 \to 0} \Res_{z \to 0}  g(z) \,  B_{\mathrm{KdV}}^{i,i}(z,z_1) \, \left[\int^{z_1} {B}^{i,j}(z_1,z')\right] .
\eeq

As a result, by applying this transformation to each dotted line, any graph is composed of a set of dotted subgraphs whose vertices have the same label separated by dashed lines. Since each subgraph with label $i$ also includes a root and leaves, it is a contribution to the correlation functions obtained for the case $N = 1$, times $h_k^{i}$ and vanishing jumps $B^{i,i}_{k,l} = 0$. In the sum over graphs, one can thus replace every sum over such sub-graphs by vertices of corresponding genus weighted by the correlation function for $N = 1$, which reads
\beq
\om_{g,n}^{\vec{i}}(\vec{z}) = \sum_{\gamma \in \Gamma_{g,n}} \Omega(\gamma)
\eeq
where
\bea
\Omega(\gamma) &=& \prod_{v \in V(\gamma)} \prod_{h \in H(v)} \Res_{Z_h \to 0} \om_{g(v),\mathrm{val}(v)}^{\mathrm{KdV},i(v)}\left(\left\{Z_h\right\}_{h \in H(v)} \right) \cr
&& \quad \prod_{e \in E(\gamma)} \int^{Z_{h_1(e)}} \int^{Z_{h_2(e)}} B_{\mathrm{reg}}^{i(v_1(h)),i(v_2(h))}(Z_{h_1(e)},Z_{h_2(e)})\cr
&& \qquad \prod_{h \in L^*(\gamma)} \int^{Z_h} {B}^{i,j}(Z_h,z_h)
\eea
where $\om_{g,n}^{\mathrm{KdV},i}(z_1,\dots,z_n)$ is the genus $g$, $n$-pointed correlation function obtained from the topological recursion in the case $N = 1$ and the initial data:
\beq
\left\{ \begin{array}{l}
x(z) = z^2 + a_i \cr
y(z) = {\displaystyle \sum_{k=1}^\infty} h_k^{i} z^k \cr
B(z,z') = B_{\mathrm{KdV}}(z,z') = {dz \otimes dz' \over (z-z')^2} \cr
\end{array} \right. .
\eeq
As explained in the preceding section, it can be expressed in terms of intersection numbers:
\begin{multline}
\om_{g,n}^{\mathrm{KdV},i}(z_1,\dots,z_n)= 
 \left(-{2 h_1^i} \right)^{2-2g-n}  \sum_{m=0}^\infty {\left(-1\right)^m \over m!} \\ \sum_{\vec{\alpha} \in \mathbb{N^*}^m} \prod_{k=1}^m (2 \alpha_k+1)!! {h_{2\alpha_k+1}^i \over h_1^i} 
\prod_{i=1}^n {(2 d_i+1)!! \, dz_i \over z_i^{2 d_i+2}}  \\
\left< \prod_{j=1}^n \tau_{d_j} \,   \prod_{k=1}^m \tau_{\alpha_k+1} \right>_{g,n+m} 
\end{multline}
which can be made more symmetric under the exchange of the ordinary and dilation leaves by writing
\begin{multline}
\om_{g,n}^{\mathrm{KdV},i}(z_1,\dots,z_n)= 
\sum_{m=0}^\infty  \left(-{2 h_1} \right)^{2-2g-n-m}   {1 \over m!} \\ \sum_{\vec{\alpha} \in \mathbb{N^*}^m} \prod_{k=1}^m (2 \alpha_k-1)!! {2 h_{2\alpha_k-1}^i } 
\prod_{i=1}^n {(2 d_i+1)!! \, dz_i \over z_i^{2 d_i+2}}  \\
\left< \prod_{j=1}^n \tau_{d_j} \,   \prod_{k=1}^m \tau_{\alpha_k} \right>_{g,n+m} 
\end{multline}

%

Absorbing the factors of the form
\begin{equation}
\frac{(2d+1)!! \mathrm{d}z}{z^{2d+2}}
\end{equation} 
into the corresponding half-edge contribution, the weight of an inner edge becomes
\beq
\Res_{z_1\to 0} \Res_{z_2 \to 0} \int^{z_1} \int^{z_2} B_{\mathrm{reg}}^{i,j}(z_1,z_2) \; {(2d_1+1)!! \mathrm{d}z_1 \over z_1^{2d_1+2}}  \; {(2d_2+1)!! \mathrm{d}z_1 \over z_2^{2d_2+2}} 
\eeq
which is equal to
\beq
\check{B}_{d_1,d_2}^{i,j} := B_{2d_1,2 d_2}^{i,j} \; (2d_1-1)!! \; (2 d_2 -1)!! 
\eeq
while the weight of the ordinary leaves becomes 
\beq
\mathrm{d}\xi_d^{i}(z_\alpha,j):=\Res_{z \to 0} {(2d+1)!! \mathrm{d}z \over z^{2d+2}} \int^z B^{i,j}(z,z_\alpha) , 
\eeq
where one considers both the singular and non-singular part of $B^{i,j}(z,z_\alpha)$. Collecting these contributions together proves the theorem.
\end{proof}

\subsection{Change of scales}

An important property of the correlation functions built in this way is their homogeneity property which reads
\beq
\forall \l \in \mathbb{C}\, , \;  \om_{g,n}(\vec z_N|x,\l y,B) = \l^{2 -2g-n} \om_{g,n}(\vec z_N|x, y,B)
\eeq

One can thus get an additional factor $\l^{i}$ by replacing $h_k^{i} \to \l^{i} h_k^{i}$ resulting in a rescaling of the weight of the vertices by $\left(\l^{\mf{i}(v)}\right)^{2-2g(v) - \mathrm{val}(v)}$.

\subsection{Weights, Laplace transform and recursive definition}

It is interesting to note that the weights of the edges are the coefficient of the Laplace transform of $B$:
\beq
\check{B}^{i,j}(u,v) := \sum_{(k,l)\in \mathbb{N}^2} \check{B}_{k,l}^{i,j} u^{-k} v^{-l}
\eeq
is equal to
\begin{multline}
\check{B}^{i,j}(u,v)  = \delta_{i,j} {uv \over u+v} + \\
{\sqrt{uv} e^{u a_i+ v a_j} \over 2 \pi} \int_{x(z)-a_i \in \mathbb{R}^+}
\int_{x(z')-a_j \in \mathbb{R}^+} B^{i,j}(z,z') e^{-ux(z)-vx(z')} .
\end{multline}
In~\cite{Espectralcurve}, it was proved that, if $dx$ is a meromorphic form defined on a Riemann surface, $\check{B}^{i,j}(u,v)$ can be factorized and expressed in terms of some basic functions. Here, we will consider the converse and build $B_{k,l}^{i,j}$ by induction in such a way that there exist a set of functions $\left\{f_{i,j}(u)\right\}_{i,j=1}^{N}$ such that
\beq \label{eq:B-f}
\check{B}^{i,j}(u,v) = {uv \over u+v} \left(\delta_{i,j}  - \sum_{k=1}^{N} f_{i,k}(u) f_{k,j}(v) \right).
\eeq

Let us define the coefficients $B_{k,l}^{i,j}$ recursively in terms of the initial data $B_{k,0}^{i,j}$ by imposing that
\beq \label{eq:xi-recursion}
\xi_{d+1}^{i}(z,j) := - {d  \xi_{d}^{i}(z,j) \over dx^{[j]}(z)} - \sum_{k=1}^{\mathfrak b} \check{B}_{d,0}^{i,k} \, \xi_{0}^{k}(z,j),
\eeq
or, in terms of the Laplace transform
\begin{align}
f_d^{i}(u,j) &:= {\sqrt{u} \over 2 \sqrt{\pi}} \int_{x(z)-a_j \in \mathbb{R}^+} e^{-u(x(z)-a_j)} dx^{j}(z) \, \xi_d^{i}(z) 
\\ \notag
&= \delta_{i,j} (-1)^d u^d - \sum_{d'} \check{B}_{d,d'}^{i,j} u^{-d'-1},
\end{align}
it reads
\beq
f_{d+1}^{i}(u,j) := - {u  f_{d}^{i}(u,j)} - \sum_{k=1}^{N} \check{B}_{d,0}^{i,k} \, f_{0}^{k}(z,j).
\eeq

With this definition, one has
\beq
\check{B}^{i,j}(u,v) = {uv \over u+v} \left(\delta_{i,j}  - \sum_{k=1}^{\mathfrak b} f_0^{k}(u,i) f_0^{k}(v,j) \right).
\eeq




\section{Identification of the two theories}

In this section we show how to find a local spectral curve corresponding to any semi-simple conformal Frobenius manifold.

Suppose some local spectral curve is given. For any $i \in \{1, \ldots, N\}$ and $k \in \Z_{\geq 0}$ define
$$
W^i_k := \sum_{j=1}^N d\left( \left(-\frac{1}{2 z^j}\frac{\partial}{\partial z^j}\right)^k \xi_0^i(z^j, j) \right).
$$

\begin{theorem}
Let~$R$ be some series of operators on an $N$-dimensional vector space~$V$ as in Section~\ref{section:Givental}. Let $Z = \hat{R} \hat\Delta\mathcal{T}$, where $\mathcal{T}$ is a product of~$N$ KdV $\tau$-functions, be the partition function of the corresponding semi-simple cohomological field theory.

Define a local spectral curve by the following data
\begin{equation}\label{lsc1}
\check{B}^{i,j}_{p,q} := [z^p w^q] \frac{\delta^{ij}-\sum_{s=1}^N R^i_s(-z) R(-w)^j_s}{z+w}
\end{equation}
and
\begin{align}\label{lsc2}
\check{h}_k^{i} &:=  [z^{k-1}]  \left(-R(-z))^i_\mathbf{1} \right)\\
h^i_1 &:= -\frac{1}{2\sqrt{\Delta^i}}.
\end{align}
Let~$\omega_{g,n}$ be the genus~$g$, $n$-pointed topological recursion invariant of this spectral curve and denote by
$$
\Omega(\{v^{d,i}\}) = \left. \left(\sum_{g,d} \omega_{g,d}\hbar^{g-1} \right)\right|_{W_d^i = v^{d,i}}
$$
their sum after a change of variables~$W^i_k \leftrightarrow v^{d,i}$. Then the partition function of the cohomological field theory and the topological recursion invariants agree in the following sense:
\begin{equation}
Z(\{v^{d,i}\}) = \exp\left(\Omega(\{v^{d,i}\}) \right).
\end{equation}
\end{theorem}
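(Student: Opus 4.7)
The plan is to prove the theorem by matching the two graph expansions term by term. On the Givental side, formula~\eqref{eq:graphs-Givental} expresses $\hat R\hat\Delta\mathcal{T}$ as a sum over the decorated stable graphs in $\Gamma$, and on the topological recursion side, Theorem~\ref{thm:EOgraphsum} gives an analogous sum~\eqref{eq:graph-EO} over $\Gamma_{g,n} \subset \Gamma$. Both expansions carry the same automorphism weights, so the proof reduces to checking that, under the dictionary~\eqref{lsc1}--\eqref{lsc2}, the local contributions of vertices, edges, dilaton leaves and ordinary leaves agree after the change of variables $W^i_k \leftrightarrow v^{k,i}$.

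First I would verify the three easy matchings. Vertex contributions coincide because the choice $h_1^i = -1/(2\sqrt{\Delta^i})$ gives
\begin{equation*}
(-2 h_1^{\mf{i}(v)})^{2-2\mf{g}(v)-\mathrm{val}(v)} = \Delta_{\mf{i}(v)}^{(2\mf{g}(v)+\mathrm{val}(v)-2)/2},
\end{equation*}
which is exactly the $\Delta$-factor accompanying the $\psi$-class integral in~\eqref{eq:graphs-Givental}; the overall $\hbar$-bookkeeping matches because both sides collect $\hbar^{g-1}$ per connected graph of total genus $g$ (on the Givental side $\hbar$ sits at vertices and edges, and $\sum_v(\mf{g}(v)-1) + |E(\gamma)| = g - 1$). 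Edge contributions agree by construction: formula~\eqref{lsc1} is the coefficient extraction of the same rational expression that defines $\mathcal{E}^{i_1,i_2}_{k_1,k_2}$ in Section~\ref{section:Givental}. Dilaton leaves agree because~\eqref{lsc2} posits exactly $\check h^i_k = [z^{k-1}](-R(-z))^i_{\mathbf{1}} = (\mathcal{L}^{\bullet})^i_k$.

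The heart of the argument is the ordinary leaves. On the topological recursion side a leaf labelled $(i,d)$ contributes $\sum_{j=1}^N \mathrm d \xi^i_d(z_j,j)$, while on the Givental side it contributes $(\mathcal{L}^*)^i_d = \sum_{d'=0}^d (-1)^{d-d'}(R_{d-d'})^i_\nu v^{d',\nu}$. Using the recursion~\eqref{eq:xi-recursion}, every $\mathrm d \xi^i_d$ unfolds into $(-\tfrac{1}{2z}\partial_z)^{d}\xi^i_0$ plus correction terms built from the $\check B^{i,k}_{p,0}$ acting on lower~$\xi^k_{d'}$; summing over $j$ and collecting, this writes $\sum_j \mathrm d \xi^i_d(z_j,j)$ as a combination of the forms $W^i_{d'}$, $d'\leq d$, with coefficients that are polynomials in the $\check B$'s. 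After the substitution $W^i_{d'}\mapsto v^{d',i}$, the dictionary is most transparent in Laplace variables: the factorised form~\eqref{eq:B-f} together with~\eqref{lsc1} forces the functions $f^i_0(u,j)$ to coincide with the entries of $R(-1/u)$ (up to the Laplace convention), and the recursion for $f^i_{d+1}$ then reproduces the Taylor expansion of $R(-z)$ coefficient by coefficient. This gives precisely the matrix $(R_{d-d'})^i_\nu$ appearing in $(\mathcal{L}^*)^i_d$.

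The main obstacle will be making this last step rigorous: one must verify that the orthogonality $R(z) R^{\mathrm T}(-z) = \Id$ satisfied by the Givental $R$-matrix, which is what makes the numerator in~\eqref{lsc1} divisible by $z+w$ and hence $\check B^{ij}_{pq}$ a well-defined polynomial, is exactly the consistency condition that produces the factorisation~\eqref{eq:B-f}; and one must keep track of the signs, Laplace conventions, and the placement of upper vs.\ lower indices in the identification $f_0 \leftrightarrow R$. Once these bookkeeping checks are complete, combining the four local matchings above with the agreement of automorphism factors yields the desired graph-by-graph equality, and taking exponentials gives $Z = \exp(\Omega)$.
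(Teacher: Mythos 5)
Your proposal is correct and follows essentially the same route as the paper: a graph-by-graph comparison of \eqref{eq:graphs-Givental} and \eqref{eq:graph-EO}, with vertices, edges and dilaton leaves matching immediately from the dictionary and the ordinary leaves handled by unfolding the recursion \eqref{eq:xi-recursion}, whose correction coefficients $\check{B}^{i,t}_{k,0}$ are exactly $-(-1)^{k+1}(R_{k+1})^i_t$ by the $q=0$ specialization of \eqref{lsc1}. The paper carries out this last step as a direct induction on the leaf label rather than via the Laplace-transform picture you sketch, but the content is the same.
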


\begin{proof}
In Sections \ref{section:Givental} and~\ref{section:EO} we have given representations of~$Z$ and~$\omega_{g,n}$ as sums over the set~$\Gamma$. We prove the theorem by showing that the contribution of each individual graph to~$Z$ is equal to the contribution to~$\Omega$.

Let $\gamma \in \Gamma$ be some graph. Note that on both sides we assign the same weight to the vertices of~$\gamma$, namely to a vertex labelled~$(g,i)$ with~$n$ half-edges attached to it labelled~$d_1, \ldots, d_n$ we associate
\begin{equation}
(-2h_1^i)^{2 - 2g - n} \left\langle \tau_{d_1} \cdots \tau_{d_n} \right\rangle_{g, n} .
\end{equation}
Furthermore, by equation~\eqref{lsc1}, any edge in~$\gamma$ contributes the same to $Z$ and~$\Omega$.

Let~$l$ be an ordinary leaf of~$\gamma$ labelled by~$k$ attached to a vertex labelled by~$(g,i)$. We use induction on~$k$ to show that the contribution to~$Z$ is the same as the contribution to~$\Omega$.

The contribution of~$l$ to~$Z$ is given by
\begin{equation}
\mathcal{L}^i_k(l)=  [z^k] \left(\sum_d (R(-z))^i_j v^{d,j} z^d \right) = \sum_{d= 0}^k (-1)^{k-d} (R_{k-d})^i_j v^{d,j}.
\end{equation}
When~$k=0$, the contribution of~$l$ to~$\Omega$ is given by
\begin{equation}
\sum_j d\xi_0^{i}(z_j,j) = W_0^i.
\end{equation}
Since~$(R_0)^i_j = \delta^i_j$, the contribution to $Z$ and~$\Omega$ agree when~$k=0$.

Now suppose that they agree for some $k \in \Z_{\geq 0}$. That is, suppose that
\begin{equation}
\sum_j d\xi_k^{i}(z^{j},j) = \sum_{l=0}^k (-1)^{k-l}(R_{k-l})^i_s W_l^s .
\end{equation}
Then, using Equation~\eqref{eq:xi-recursion}, the contribution of the leaf to~$\Omega$ for the index $k+1$ is given by
\begin{multline}
\sum_j d\xi_{k+1}^{i}(z^{j},j) = \sum_j d\left( -\frac{\partial \xi_k^{i}(z^{j},j)}{\partial x^{j}} - \sum_{t=1}^N \check{B}_{k,0}^{i,t} \xi_0^{t}(z^{j},j) \right) \\
= \sum_j d \left( - \frac{1}{2 z^j} \frac{\partial}{\partial z^{j}} \xi_k^{i} (z^{j}, j) - \sum_{t=1}^n -(-1)^{k+1}(R_{k+1})^i_t \xi_0^{t}(z^{j},j) \right) \\
= \sum_{l= 0}^k (-1)^l (R_{l})^i_t W_{k+1-l}^t + (-1)^{k+1}(R_{k+1})^i_t W_0^t = \sum_{l = 0}^{k+1} (-1)^l(R_{l})^i_t W_{k+1-l}^t ,
\end{multline}
where we used equation~\eqref{lsc1} to write
\begin{equation}
\check{B}^{i,t}_{k,0} = - (-1)^{k+1}(R_{k+1})^i_t.
\end{equation}

This completes the induction, and since it is clear that the dilaton leaves contribute the same in both cases, it also completes the proof of the theorem.
\end{proof}

%
%
\begin{remark}
The theorem above deals with the potential of a cohomological field theory written in terms of formal variables $v^{d,i}$ corresponding to normalized canonical basis. In order to pass to flat coordinates one can change the variables in the following way:
\begin{equation}
v^{d,i} = \Psi_\mu^i t^{d,\mu}.
\end{equation}
On the spectral curve side it will correspond to changing the variables $W^i_k$ in the following way:
\begin{equation}
W^i_k = \Psi_\mu^i V^{\mu}_k.
\end{equation}

Thus, the theorem holds in the same form for the potential of cohomological field theory written in terms of formal variables $t^{d,\mu}$, only one should identify $t^{d,\mu}$ with $V^{\mu}_d$.
\end{remark}

\begin{remark}
Above we established the correspondence between cohomological field theories and symplectic invariants of spectral curves. However, as noted in Remark~\ref{rem:GWcase}, in the case of Gromov-Witten theories we cannot disregard quadratic terms. So, in the formula for the total descendent potential an additional operator $\hat{S}$ appears. In some cases, again see Remark~\ref{rem:GWcase}, it performs only a linear change of formal variables $t^{d,\mu}$ on which the potential depends. Thus, to establish the correspondence in this case, one has to change the variables $W^i_k$ in precisely the same way, and then identify the resulting variables with $t^{d,\mu}$, similar to the case of previous remark. Occasionally, the changes of variables preformed by $\hat\Psi$ and $\hat S^{-1}$ can be a re-expansion of $\omega_{g,n}$ in a new coordinate on the spectral curve.  We explain this procedure in detail for the case of $\CP1$ below in section \ref{sec:NSconjecture}.
\end{remark}

\begin{remark} The system of equations obtained via a Laplace transform from the equations of Givental for the $R$-matrix (that is, the so-called equations of deformed flat connection) is studied in detail in~\cite[Section 5]{Dub98}. This gives, in particular, a recipe to reconstruct the two-point function directly from the Frobenius structure bypassing the reconstruction of the $R$-matrix. This also explains why we call the critical values  $a_1,\dots, a_N$ of $x$ the canonical coordinates.
\end{remark}




\section{The Norbury-Scott conjecture}
\label{sec:NSconjecture}

In this section we recall and prove the Norbury-Scott conjecture on the stationary sector of the Gromov-Witten theory of $\CP1$.

\subsection{Gromov-Witten theory of $\CP1$}\label{sec:CP1-Givental}
\label{s:gwcp1} The Gromov-Witten theory of $\CP1$ is discussed from the geometric point of view in many sources, see e.~g.~\cite{OP01}. Givental proved in~\cite{Giv01} that his formula for the formal Gromov-Witten potential coincides with the geometric Gromov-Witten potential of $\CP1$, so we discuss it here only from the Givental point of view, ignoring the geometric background. The same computations one can find in~\cite{Song-Song,Song}.

The underlying structure of Frobenius manifold is determined by the following solution of the WDVV equation
\begin{equation}
\frac{1}{2}(t^1)^2t^2+e^{t^2},
\end{equation}
and the scalar product given by
\begin{equation}
\left(\begin{matrix}
0 & 1 \\
1 & 0
\end{matrix}
\right).
\end{equation}
All ingredients of the Givental formula depend on a particular choice of the point on the Frobenius manifold, and in this case we choose the point $(0,0)$ in the coordinates $(t_1,t_2)$.

We perform a direct computation following the recipe of Givental in~\cite{Giv01b}, see also Section~\ref{sec:Frobenius}. As a possible choice of the canonical coordinates, we use
\begin{align}
u^1&=t^1+2\exp(t^2/2); \\
u^2&=t^1-2\exp(t^2/2).
\end{align}
In particular, for $t^1=t^2=0$ we have $u^1=-u^2=2$. Then,
\begin{align}
\Delta_1^{-1}=\left\langle \frac{\d}{\d u^1}, \frac{\d}{\d u^1} \right\rangle & = \frac{\exp(-t^2/2)}{2}; \\
\Delta_2^{-1}=\left\langle \frac{\d}{\d u^2}, \frac{\d}{\d u^2} \right\rangle & = \frac{-\exp(-t^2/2)}{2},
\end{align}
so we can choose the square roots as
\begin{align}
\Delta_1^{-1/2} & = \frac{\exp(-t^2/4)}{\sqrt 2}; \\
\Delta_2^{-1/2} & = \frac{-i\exp(-t^2/4)}{\sqrt 2},
\end{align}
and for this choice we have the following matrix of transition from the basis given by $(\d/\d t^1, \d/\d t^2)$ to the normalized canonical basis:
\begin{equation}
\Psi=\left(\begin{matrix}
\frac{\exp(-t^2/4)}{\sqrt{2}} & \frac{-i\exp(-t^2/4)}{\sqrt{2}} \\
\frac{\exp(t^2/4)}{\sqrt{2}} & \frac{i\exp(t^2/4)}{\sqrt{2}}
\end{matrix}\right).
\end{equation}
It is the matrix $\Psi=\Psi_\alpha^i$, where $\alpha$ labels the rows and corresponds to the flat basis, while $i$ labels the columns and corresponds to the normalized canonical basis.

The recipe of reconstruction of the matrix $R$ from~\cite{Giv01b} gives at the origin the matrix $R(\zeta)=\sum_{k=0}^\infty R_k \zeta^{k}$, where
\begin{equation}
R_k=\frac{(2k-1)!!(2k-3)!!}{2^{4k} k!}\cdot
\left(\begin{matrix}
-1 & (-1)^{k+1} 2ki \\
2ki & (-1)^{k+1}
\end{matrix}\right)
\end{equation}

The $S$ matrix is given by the derivatives of the deformed flat coordinates, computed in~\cite[Example 3.7.9]{DZ05}
At the origin we have:
\begin{align}
S(\zeta^{-1}) = & \Id + \zeta^{-1} \cdot
\left(\begin{matrix}
0 & 0 \\
1 & 0
\end{matrix}\right)
\\ \notag
&+\sum_{k=1}^\infty
\frac{\zeta^{-2k}}{(k!)^2}
\left(\begin{matrix}
1-2k\left(\frac{1}{1}+\cdots \frac{1}{k}\right) & 0 \\
0 & 1
\end{matrix}\right)
\\ \notag
& +\sum_{k=1}^\infty
\frac{\zeta^{-2k-1}}{(k!)^2}
\left(\begin{matrix}
0 & -2\left(\frac{1}{1}+\cdots \frac{1}{k}\right) \\
\frac{1}{k+1} & 0
\end{matrix}\right).
\end{align}
(Note once again that we are using the convention that the matrices are acting on vector rows, opposite to the standard one).

The unit vector at the origin in the normalized canonical basis is equal to
\begin{equation}
e=\left(1,0\right)\cdot \left(\begin{matrix}
\frac{1}{\sqrt{2}} & \frac{-i}{\sqrt{2}} \\
\frac{1}{\sqrt{2}} & \frac{i}{\sqrt{2}}
\end{matrix}\right)
=\left( \frac{1}{\sqrt{2}}, \frac{-i}{\sqrt{2}} \right).
\end{equation}

Therefore, the dilaton leaves (cf. Equation~\eqref{eq:unmarkedLeaf}) in the Givental formula for $\CP1$ at the origin are
\begin{align}
(\mathcal{L}^\bullet)^1_{k+1} & = \frac{1}{\sqrt 2} \cdot \frac{(-1)^{k+1} \left((2k-1)!!\right)^2}{k! 2^{4k}}; \\
(\mathcal{L}^\bullet)^2_{k+1} & = \frac{i}{\sqrt 2} \cdot \frac{\left((2k-1)!!\right)^2}{k! 2^{4k}}
\end{align}
for $k\geq 0$.

\begin{proposition}
The Gromov-Witten potential of $\CP1$,
\begin{equation}
Z_{\CP1}(\hbar,\{t^{\ell,1},t^{\ell,2}\}_{\ell=0}^\infty),
\end{equation}
is obtained from $\hat R \hat \Delta Z_{\mathrm{KdV}}^{\otimes 2}$ (understood as a sum over graphs in the sense of Section~\ref{sec:expr-graphs} and wirtten down in the normalized canonical basis, that is, in the variables $v^{d,i}$, $d\geq 0$, $i=1,2$) via a linear change of variables given by
\begin{equation}
\sum_{m\geq k}^\infty \left(t^{m,1}, t^{m,2}\right) S_k \zeta^{m-k}= \sum_{\ell=0}^\infty \left(v^{\ell,1}, v^{\ell,2}\right) \zeta^{\ell}\cdot \Psi^{-1},
\end{equation}
and a correction of the unstable terms (that is, $(g,n)$-correlators with $2g-2+n\leq 0$).
\end{proposition}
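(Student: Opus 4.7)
The plan is to invoke Givental's formula for the Gromov--Witten potential of $\CP^1$ and then identify the combined action of $\hat S^{-1}\hat\Psi$ on the variables of $\hat R\hat\Delta\mathcal T$ with the linear substitution displayed in the statement, modulo corrections in the unstable range.

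First I would recall from Remark~\ref{rem:GWcase} that, disregarding the $(g,n)=(1,0)$ term throughout, the Gromov--Witten potential of $\CP^1$ equals $\hat S^{-1}\hat\Psi\hat R\hat\Delta\mathcal T$. Denote $W(\{v^{\ell,i}\}):=\hat R\hat\Delta\mathcal T$, regarded as a formal series in the normalized canonical variables. By Proposition~\ref{prop:Givental-trivial} the operator $\hat\Psi$ acts on $W$ as the linear substitution $v^{d,i}=\Psi^i_\mu t^{d,\mu}$; the only nontrivial step is to understand the effect of $\hat S^{-1}$.

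For $\hat S^{-1}$ I would read off its decomposition from formula~\eqref{eq:squant} into its four summands. The matrix $S_1$ computed in Section~\ref{s:gwcp1} has a zero first row, so the translation summand $-(s_1)^\mu_1\,\partial/\partial t^{0,\mu}$ vanishes. The $\hbar^{-1}$-weighted linear and quadratic summands act on $\sum_g\hbar^{g-1}\mathcal F_g$ by adding linear (resp.\ quadratic) functions of the $t$-variables to its $\hbar^{-1}$-part, so they modify only the one- and two-point genus-zero correlators, which lie in the unstable range $2g-2+n\leq 0$ and are absorbed in the ``correction of the unstable terms'' of the statement. Thus, modulo such corrections, $\hat S^{-1}$ reduces to the exponential of the third summand of~\eqref{eq:squant}, which as recorded in Remark~\ref{rem:GWcase} is exactly the linear substitution $t^{d,\mu}\mapsto\sum_{m\geq d}(S_{m-d})^\mu_\nu t^{m,\nu}$. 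Composing with the $\hat\Psi$-substitution yields
\[
v^{\ell,i}\;=\;\sum_{m\geq\ell}\Psi^i_\nu\,(S_{m-\ell})^\nu_\mu\,t^{m,\mu}.
\]
Writing $\vec v_\ell$ and $\vec t_m$ as row vectors, this becomes $\vec v_\ell=\bigl(\sum_{m\geq\ell}\vec t_m\,S_{m-\ell}\bigr)\Psi$; multiplying on the right by $\Psi^{-1}$, then by $\zeta^\ell$, and summing over $\ell\geq 0$---equivalently, reindexing $k=m-\ell$---reproduces verbatim the identity stated in the proposition.

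The main delicate point will be the clean separation of stable from unstable contributions inside $\hat S^{-1}$: the translation piece really must vanish (hence the importance of the computation $(s_1)^\mu_1=0$ for $\CP^1$ in Section~\ref{s:gwcp1}), and the $\hbar^{-1}$-weighted pieces really must feed only into $(g,n)=(0,1)$ and $(0,2)$ correlators despite the non-commutativity of the four summands in \eqref{eq:squant}. Once one checks, say by expanding the exponential order by order, that all mutual commutators between these summands continue to produce only unstable monomials in~$t$, the remaining argument is a direct unpacking of definitions combined with elementary manipulations of formal power series in~$\zeta$.
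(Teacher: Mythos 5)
Your proposal is correct and follows essentially the same route as the paper: invoke Givental's formula $\hat S^{-1}\hat\Psi\hat R\hat\Delta\mathcal{T}$, note that $\hat\Psi$ is a linear substitution by definition, and decompose $\hat S^{-1}$ via~\eqref{eq:squant} into a translation that vanishes because $(1,0)S_1=(0,0)$, the stated linear change of variables, and corrections supported in the unstable range. The paper delegates the separation of the unstable pieces to \cite[Section 4.2]{FabShaZvo06}, whereas you propose verifying it by hand, but this is only a difference in the level of detail, not of method.
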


\begin{proof} In order to get the Gromov-Witten potential of $\CP1$ as given by the Givental formula, we have to apply the $\hat\Psi$- and  $\hat S^{-1}$-action to the expression in terms of graphs discussed in   Section~\ref{sec:expr-graphs} that corresponds to  $\hat R \hat \Delta Z_{\mathrm{KdV}}^{\otimes 2}$. The $\hat\Psi$-action is just a linear change of variable by definition. The general $S$-action is discussed in~\cite[Section 4.2]{FabShaZvo06}. It is a combination of a shift of variables that vanishes in our case (indeed, $(1,0)S_1=(0,0)$), the linear change of variables that we have in the statement of Proposition, and a correction of unstable terms that is not essential for us.
\end{proof}

\subsection{The Norbury-Scott conjecture}

Norbury and Scott~\cite{NS} propose the following construction. They consider a spectral curve given by
\begin{equation}
\left\{
\begin{array}{rcl}
x & = & z + \frac{1}{z} ; \\
y & = & \log z,
\end{array}
\right.
\end{equation}
and the standard two-point function
\begin{equation}
B(z,z')=\frac{dz\otimes dz'}{(z-z')^2}.
\end{equation}

Via topological recursion they obtain the $n$-forms $\omega_{g,n}$ that they consider in the global variable $x$, and they conjecture the following theorem (they prove it for $g=0,1$):

\begin{theorem} \label{thm:N-S}For $2g-2+n>0$, we have:
\begin{equation}
\prod_{j=1}^{n}\left(-\Res_{x_j=\infty} \frac{1}{(a_j+1)!} x_1^{a_j+1}\right) \omega_{g,n}(x_1,\dots,x_n) = \langle \prod_{j=1}^n \tau_{2,a_j} \rangle_g,
\end{equation}
where $\langle \prod_{j=1}^n \tau_{2,a_j} \rangle_g$ is the corresponding correlator in $Z_{\CP1}$, that is, the coefficient of $\hbar^{g-1} \prod_{j=1}^n t_{2,a_j} / |Aut(a_1,\dots,a_n)|$ in $\log Z_{\CP1}$.
\end{theorem}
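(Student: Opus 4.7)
The plan is to combine Theorem~4.1 (the main identification) with Proposition~5.1 (the $\hat{S}^{-1}\hat{\Psi}$ change of variables for $\CP1$). The two critical points of $x = z + 1/z$ sit at $z = \pm 1$ with critical values $a_1 = 2$ and $a_2 = -2$, which match the canonical coordinates $u^1 = -u^2 = 2$ of the $\CP1$ Frobenius manifold at the origin computed in Section~\ref{s:gwcp1}. Thus the two-branch-point version of the local topological recursion is applicable.

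First I would introduce local coordinates $\zeta_i$ near $z = \pm 1$ defined by $x = \zeta_i^2 + a_i$ (tracking any overall rescaling of $y$ through the homogeneity property of Section~3), expand $y = \log z$ and the global Bergman kernel $dz \otimes dz'/(z-z')^2$ as Taylor series, and read off the local data $h_k^i$ and $\check{B}_{p,q}^{i,j}$. The core calculation is then to verify that this local data agrees with the prescriptions~(\ref{lsc1}) and~(\ref{lsc2}) of Theorem~4.1 applied to the Givental $R$-matrix and $\Delta$-matrix of $\CP1$ listed explicitly in Section~\ref{s:gwcp1}. Using the Laplace transform factorization
\begin{equation*}
\check{B}^{i,j}(u,v) = \frac{uv}{u+v}\Bigl(\delta^{i,j} - \sum_{s=1}^{N} f_0^{s}(u,i)\, f_0^{s}(v,j)\Bigr)
\end{equation*}
from the last subsection of Section~3, this reduces to a generating-function identity between the Laplace transform of the Bergman kernel expanded at both branch points and the closed form of the $\CP1$ $R$-matrix involving the double factorials $(2k-1)!!(2k-3)!!/(2^{4k}k!)$. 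The dilaton-leaf matching $\check{h}_k^i = -[z^{k-1}](R(-z))^i_{\mathbf{1}}$ is of the same nature.

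Once this matching is established, Theorem~4.1 yields
\begin{equation*}
\sum_{g,n}\hbar^{g-1}\omega_{g,n}^{\mathrm{NS}}\Big|_{W_d^i = v^{d,i}} = \log\bigl(\hat{R}\,\hat{\Delta}\,Z_{\mathrm{KdV}}^{\otimes 2}\bigr),
\end{equation*}
i.e.\ the NS topological recursion produces the $\CP1$ ancestor potential in normalized canonical coordinates $v^{d,i}$. It remains to show that the Norbury-Scott residue operator $-\Res_{x_j=\infty}x_j^{a_j+1}/(a_j+1)!$ implements precisely the composed linear change of variables $\hat{\Psi}\hat{S}^{-1}$ of Proposition~5.1 restricted to the $t^{a_j,2}$-directions. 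Concretely, I would realize the 1-forms $W_d^i$ as explicit meromorphic differentials on the global NS curve (using $x$ as a global coordinate), compute their pairing with the residue functional, and match the resulting numerical matrix with the relevant row of $\Psi S_k$. Because all entries of $S$ and $\Psi$ for $\CP1$ are given in closed form in Section~\ref{s:gwcp1}, this last step reduces to a direct residue computation with rational functions of $z$.

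The hard part will be the first step, namely the Laplace-transform matching of $\check{B}^{i,j}$ with the closed-form $R$-matrix of $\CP1$. A cleaner route bypasses the explicit $R$-matrix by constructing the two-point function directly from the Frobenius structure via Dubrovin's deformed flat connection, as noted in the remark following Theorem~4.1; this would reduce the matching to comparing two rational functions of $u,v$ built from the same Frobenius data in two different ways. A secondary, bookkeeping concern is that Norbury and Scott use $y = \log z$ rather than the locally rescaled version dictated by $h_1^i = -1/(2\sqrt{\Delta^i})$; the resulting overall scale factors $(2\sqrt{\Delta^i})^{2g-2+n}$ must be absorbed into the normalization of the residue operator, and this cancellation must be tracked consistently.
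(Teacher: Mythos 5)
Your proposal follows essentially the same route as the paper's proof: local coordinates $z_i=\sqrt{x\mp 2}$ at the two branch points, explicit expansion of $y=\log z$ and of the regular part of the Bergman kernel, identification of the Laplace-transform data $f_{ij}(w)$ with the closed-form $\CP1$ $R$-matrix, application of the main identification theorem, and a final residue computation showing that the Norbury--Scott functional realizes the $\hat S^{-1}\hat\Psi$ change of variables of Proposition 5.1. The bookkeeping you flag as a secondary concern (the rescaling coming from $h_1^i=-1/(2\sqrt{\Delta^i})$ and the sign convention for the recursion kernel) is exactly what the paper resolves in its vertex-coefficient comparison, where the accumulated factors combine into $(\Delta_i^{1/2})^{2g-2+n+m}$, a rescaling of variables by $-\sqrt 2$, and an overall $(-1)^n$ accounted for by the opposite kernel sign convention.
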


In the rest of this section we prove this theorem, identifying all ingredients of the topological recursion with the corresponding parts of the Givental formula.

\subsection{Proof of the Norbury-Scott conjecture}

\subsubsection{Local coordinates near the branch points}\label{sec:CP1-localcoord}

We denote the local coordinates by $z_1=\sqrt{x-2}$ and $z_2=\sqrt{x+2}$. Then we have:
\begin{align}
x  = z_1^2+2 &\mbox{ near}\ x=2,\ z=1,\ z_1=0; \\
x  = z_2^2-2 &\mbox{ near}\ x=-2,\ z=-1,\ z_2=0.
\end{align}
Therefore,
\begin{align}
z  & = 1+\frac{z_1^2}{2}\pm z_1\sqrt{1+\frac{z_1^2}{4}}; \\
z  & = -1+\frac{z_2^2}{2}\pm i z_2\sqrt{1-\frac{z_2^2}{4}}.
\end{align}
In both cases we choose $+$ for $\pm$.

\subsubsection{Expansion of $y$}

Recall that $y=\log z$. The direct computation shows:
\begin{align}
y  & = \int \frac{d z_1}{\sqrt{1+\frac{z_1^2}{4}}}; \\
y  & = \int \frac{-i\, d z_2}{\sqrt{1-\frac{z_2^2}{4}}};
\end{align}
Note that
\begin{align}
\frac{1}{\sqrt{1+\frac{z_1^2}{4}}} & = 1 + \sum_{k=1}^{\infty} z_1^{2k} \cdot \frac{(-1)^{k}(2k-1)!!}{k! 2^{3k}}; \\
\frac{-i}{\sqrt{1-\frac{z_2^2}{4}}} & = -i + \sum_{k=1}^{\infty} z_2^{2k} \cdot \frac{(-i)\cdot (2k-1)!!}{k! 2^{3k}}.
\end{align}
Therefore
\begin{align}
y & = z_1 + \sum_{k=1}^{\infty} z_1^{2k+1} \cdot \frac{(-1)^{k}(2k-1)!!}{k! 2^{3k}(2k+1)}; \\
y & = -i z_2+ \sum_{k=1}^{\infty} z_2^{2k+1} \cdot \frac{(-i)\cdot (2k-1)!!}{k! 2^{3k}(2k+1)}.
\end{align}

Thus the coefficients $\check h^i_{k+1}$, $k\geq 0$, are given by the following formulas:
\begin{align}
\check h^1_{k+1} & = 2\cdot \frac{(-1)^{k}\left((2k-1)!!\right)^2}{k! 2^{3k}}; \\
\check h^2_{k+1} & = 2 \cdot \frac{(-i)\cdot \left((2k-1)!!\right)^2}{k! 2^{3k}}.
\end{align}

\subsubsection{Matrix $f_{i,j}(w)$}\label{sec:CP1-fij} We use the following definition of the matrix $f_{ij}(w)$ (cf. Equation~\eqref{eq:B-f}):
\begin{equation}
f_{ij}(w)=\delta_{ij}-w \check B^{[ij]}(0,w^{-1}),
\end{equation}
where $w=v^{-1}$.  We use $\tilde B^{ij}_{0,l} = (B^{ij}_{\mathrm{reg}})_{0,2l} (2l-1)!!$, and the following expressions:
\begin{align}
B^{11}_{\mathrm{reg}}(0,z_1) & = \left[\frac{dz(z'_1)\otimes dz(z_1)}{(z(z'_1)-z(z_1))^2}
- \frac{dz'_1\otimes dz_1}{(z'_1-z_1)^2} \right]_{z'_1=0} \\
B^{12}_{\mathrm{reg}}(0,z_2) & = \left[\frac{dz(z'_1)\otimes dz(z_2)}{(z(z'_1)-z(z_2))^2}
\right]_{z'_1=0} \\
 B^{21}_{\mathrm{reg}}(0,z_1) & = \left[\frac{dz(z'_2)\otimes dz(z_1)}{(z(z'_2)-z(z_1))^2}
\right]_{z'_2=0} \\
 B^{22}_{\mathrm{reg}}(0,z_2) & = \left[\frac{dz(z'_2)\otimes dz(z_2)}{(z(z'_2)-z(z_2))^2}
- \frac{dz'_2\otimes dz_2}{(z'_2-z_2)^2} \right]_{z'_2=0}
\end{align}
Therefore,
\begin{align}
B^{11}_{\mathrm{reg}}(0,z_1) & = \frac{1}{z_1^2}\left(\frac{1}{\sqrt{1+\frac{z_1^2}{4}}}-1\right) \\
B^{12}_{\mathrm{reg}}(0,z_2) & = \frac{i}{4 (1-\frac{z_2^2}{4})^{3/2}} \\
B^{21}_{\mathrm{reg}}(0,z_1) & = \frac{i}{4 (1+\frac{z_1^2}{4})^{3/2}} \\
B^{22}_{\mathrm{reg}}(0,z_2) & = \frac{1}{z_2^2}\left(\frac{1}{\sqrt{1-\frac{z_2^2}{4}}}-1\right)
\end{align}
So, we have the following expansions:
\begin{align}
B^{11}_{\mathrm{reg}}(0,z_1) & = \sum_{k=0}^{\infty} z_1^{2k} \cdot \frac{(-1)^{k+1}(2k+1)!!}{(k+1)! 2^{3(k+1)}} \\
B^{12}_{\mathrm{reg}}(0,z_2) & = \sum_{k=0}^{\infty} z_2^{2k} \cdot \frac{i(2k+1)!!}{(k)! 2^{3k+2}} \\
B^{21}_{\mathrm{reg}}(0,z_1) & = \sum_{k=0}^{\infty} z_2^{2k} \cdot \frac{i(-1)^k(2k+1)!!}{(k)! 2^{3k+2}} \\
B^{22}_{\mathrm{reg}}(0,z_2) & = \sum_{k=0}^{\infty} z_2^{2k} \cdot \frac{(2k+1)!!}{(k+1)! 2^{3(k+1)}}.
\end{align}
The formulas for $f_{ij}(w)$ are then
\begin{align}
f_{11}(w) & = 1+\sum_{k=1}^{\infty} w^{k} \cdot \frac{(-1)^{k+1}(2k-1)!!(2k-3)!!}{k! 2^{3k}} \\
f_{12}(w) & = \sum_{k=1}^{\infty} w^{k} \cdot \frac{-i(2k-1)!!(2k-3)!!}{(k-1)! 2^{3k-1}} \\
f_{21}(w) & = \sum_{k=1}^{\infty} w^{k} \cdot \frac{(-1)^{k}i(2k-1)!!(2k-3)!!}{(k-1)! 2^{3k-1}} \\
f_{22}(w) & = 1+\sum_{k=1}^{\infty} w^{k} \cdot \frac{-(2k-1)!!(2k-3)!!}{k! 2^{3k}}
\end{align}
This coinsides with the formula for the $\sum_{k=0}^\infty R_k2^{k}(-w)^k$ at the point $(0,0)$.

\subsubsection{Comparison of the coefficient of $(g,n,m)$-vertex} In this section we consider a vertex of genus $g$ with $n$ attached half-edges or ordinary leaves, and $m$ dilaton leaves, with an associated intersection number $\langle \prod_{i=1}^n \tau_{d_i} \prod_{i=1}^m \tau_{a_i+1}\rangle_{g,n+m}$. 
There are vertices of type $1$ and type $2$, depending on the canonical coordinate that we associate to the vertex. We compare the coefficients that we associate to these vertices in the Givental case, using the data from Section~\ref{sec:CP1-Givental} in Formula~\eqref{eq:graphs-Givental}, and in the case of local topological recursion, using the data from Sections~\ref{sec:CP1-localcoord}-\ref{sec:CP1-fij} in Formula~\eqref{eq:graph-EO}.

The coefficients that we have in Formula~\eqref{eq:graph-EO} (at the vertex of the type $1$ and $2$ resp.):
\begin{equation}
(-2)^{2-2g-n-m} \qquad \mbox{and} \qquad  (2i)^{2-2g-n-m}.
\end{equation}

Let us compute how these coefficients change if we take into account all the differences between $R$-matrix and the dilaton leaves.
First, observe that the extra factor of $2^k$ in $R_k$ and, in addition, an extra factor of $\sqrt 2$ that we have to put by hand on each ordinary leave give us together the extra factors of
\begin{equation}
2^{\sum_{i=1}^n d_i}2^{n/2} \qquad \mbox{and} \qquad  2^{\sum_{i=1}^n d_i} 2^{n/2}.
\end{equation}
Then the quotient of the contributions of the dilaton leaves gives us the extra factors of
\begin{equation}
2^{\sum_{i=1}^m (a_i+1)}2^{m/2}(-1)^m\qquad \mbox{and} \qquad  2^{\sum_{i=1}^m (a_i+1)} 2^{m/2}(-1)^m.
\end{equation}
Let us assign by hand an extra factor of $(-1)^{2g-2+n}$ to each $(g,n,m)$-vertex. This way we get the following coefficients:
\begin{equation}
2^{g-1+n/2+m/2} \qquad \mbox{and} \qquad  2^{g-1+n/2+m/2}i^{2g-2+n+m}.
\end{equation}
These coefficients are precisely
\begin{equation}
\left((\Delta_1)^{1/2}\right)^{2g-2+n+m} \qquad \mbox{and} \qquad  \left((\Delta_2)^{1/2}\right)^{2g-2+n+m}.
\end{equation}

Therefore, the coefficient of $\prod_{k=1}^{\hat n} W_{d_k}^{i_k}$ in a graph of global genus $\hat g$ with $\hat n$ marked leaves in the Formula~\eqref{eq:graph-EO} for the set up of Norbury-Scott, multiplied by
\begin{equation}
2^{\hat n/2}(-1)^{2\hat g-2+\hat n}=(-\sqrt 2)^{\hat n},
\end{equation}
is equal to the coefficient of $\prod_{k=1}^{\hat n} t^{d_k,i_k}$ in the same graph in Formula~\eqref{eq:graphs-Givental}. This extra factor will be taken into account via rescaling of the variables by $-\sqrt 2$.

\subsubsection{The $\Psi$-action}

Let us apply the $\Psi$-operator to the leaves. After comparing the $R$-action with graph expansion given formulas~\eqref{eq:graphs-Givental} and~\eqref{eq:graph-EO}, and taking into account the extra factor of $-\sqrt 2$, we have the following identification of the marking on the leaves:
\begin{equation}
\sum_{a-b=c} (t^{a,1},t^{a,2}) S_b = \left( W^{1}_{c}, W^{2}_{c}\right) \Psi^{-1} / (-\sqrt 2).
\end{equation}
Here
\begin{align}
W^1_0 & =
\left. \frac{dz}{(1-z)^2}\right|_{z=z(z_1)}
+
\left. \frac{dz}{(1-z)^2}\right|_{z=z(z_2)} \\
W^2_0 & =
\left. \frac{i\, dz}{(1+z)^2}\right|_{z=z(z_1)}
+
\left. \frac{i\, dz}{(1+z)^2}\right|_{z=z(z_2)},
\end{align}
and
\begin{equation}
W^{i}_c= d\left( \left(-\frac{d}{dx}\right)^{c} \int W^i_0 \right),
\end{equation}
so we can work in the global coordinate $z$ rather than in the local coordinates $z_1,z_2$.

Since
\begin{equation}
\Psi^{-1}/(-\sqrt 2)=
\left(\begin{matrix}
\frac{-1}{2} & \frac{-1}{2} \\
\frac{-i}{2} & \frac{i}{2}
\end{matrix}\right),
\end{equation}
we have:
\begin{equation} \label{eq:S-change}
\sum_{a-b=c} (t^{a,1},t^{a,2}) S_b = \left(U^1_c, U^2_c\right),
\end{equation}
where
\begin{align}
U^1_0 & = \frac{1}{2} \left( -\frac{dz}{(1-z)^2} + \frac{dz}{(1+z)^2}\right)\\
U^2_0 & = \frac{-1}{ 2} \left( \frac{dz}{(1-z)^2} + \frac{dz}{(1+z)^2}\right)
\end{align}
and
\begin{equation}
U^i_c= d\left( \left(-\frac{d}{dx}\right)^{c} \int U^i_0 \right), \qquad i=1,2; c=0,1,2,\dots.
\end{equation}

\subsubsection{The $S$-action} The $S$-action is just a linear change of variables prescribed by Equation~\eqref{eq:S-change}. This means that we replace each $U^i_c$ with a linear combination of times $t^{a,j}$, $a\geq c$, where the coefficient of $t^{a,2}$ (this is the series of variables corresponding to the stationary sector) is equal to
\begin{equation}\label{eq:S-coeff-1}
\begin{cases}
0, & \mbox{if } a-c \mbox{ is even;} \\
\frac{1}{(k+1)\cdot (k!)^2}, & \mbox{if } a-c=2k+1.
\end{cases}
\end{equation}
for $i=1$, and 
\begin{equation}\label{eq:S-coeff-2}
\begin{cases}
\frac{1}{(k!)^2}, & \mbox{if } a-c=2k; \\
0, & \mbox{if } a-c \mbox{ is odd.}
\end{cases}
\end{equation}
for $i=2$.

Norbury and Scott make the same kind of a linear change of variables, with the coefficient of $t^{a,2}$ in $U^j_c$, $j=1,2$, given by
\begin{equation}\label{eq:S-coeff-NS}
-\Res_{x=\infty} \frac{1}{(a+1)!} x^{a+1}U^j_c = \frac{1}{(a+1)!} \Res_{z=0} \left(z+\frac{1}{z}\right)^{a+1} U^j_c.
\end{equation}

In order to complete the proof of Theorem~\ref{thm:N-S}, we have to check two things: (1) that the Norbury-Scott formula for the contribution depends only on the difference $a-c$; (2) that for $c=0$ Equation~\eqref{eq:S-coeff-NS} gives exactly the same coefficients as we have in Equations~\eqref{eq:S-coeff-1} and~\eqref{eq:S-coeff-2}.

The first thing follows directly from the formula. Indeed,
\begin{align}
-\oint \frac{x^{a+1}}{(a+1)!} d\left( \left(-\frac{d}{dx}\right)^{c} \int U^j_0 \right) &
= \oint \frac{x^{a}}{(a)!} \left( \left(-\frac{d}{dx}\right)^{c} \int U^j_0 \right) \, dx
\\ \notag
& = \oint \frac{x^{a-c}}{(a-c)!} \left( \int U^j_0 \right) \, dx
\\ \notag
& = - \oint \frac{x^{a+1-c}}{(a+1-c)!} \, d \left( \int U^j_0 \right).
\end{align}
In particular, we see that the coefficient is equal to $0$ if $a<c$.

Then, a direct computation shows that
\begin{align}
& \frac{1}{(a+1)!} \Res_{z=0} \left(z+\frac{1}{z}\right)^{a+1} U^1_0
\\ \notag
& = \frac{1}{(a+1)!} \Res_{z=0} \left(z+\frac{1}{z}\right)^{a+1} \frac{1}{2} \left( -\frac{dz}{(1-z)^2} + \frac{dz}{(1+z)^2}\right)
\\ \notag
& =  \frac{1}{(a+1)!} \Res_{z=0} \left(z+\frac{1}{z}\right)^{a+1} \frac{-2z\,  dz}{(1-z^2)^2}
\\ \notag
& =
\begin{cases}
0, & \mbox{if } a \mbox{ is even;} \\
\frac{-2}{(2k+2)!}\left(\binom{2k+2}{0}\cdot (k+1) + \binom{2k+2}{1}\cdot k + \cdots \binom{2k+2}{k}\cdot 1  \right) & \mbox{if } a=2k+1.
\end{cases}
\\ \notag
& =
\begin{cases}
0, & \mbox{if } a \mbox{ is even;} \\
\frac{-1}{(k+1)(k!)^2} & \mbox{if } a=2k+1.
\end{cases}
\end{align}
and
\begin{align}
& \frac{1}{(a+1)!} \Res_{z=0} \left(z+\frac{1}{z}\right)^{a+1} U^2_0
\\ \notag
& = \frac{1}{(a+1)!} \Res_{z=0} \left(z+\frac{1}{z}\right)^{a+1} \frac{-1}{2} \left( \frac{dz}{(1-z)^2} + \frac{dz}{(1+z)^2}\right)
\\ \notag
& =  \frac{-1}{(a+1)!} \Res_{z=0} \left(z+\frac{1}{z}\right)^{a+2} \frac{z\,  dz}{(1-z^2)^2}
\\ \notag
& =
\begin{cases}
\frac{-1}{(2k+1)!}\left(\binom{2k+2}{0}\cdot (k+1) + \binom{2k+2}{1}\cdot k + \cdots \binom{2k+2}{k}\cdot 1  \right) & \mbox{if } a=2k; \\
0, & \mbox{if } a \mbox{ is odd}
\end{cases}
\\ \notag
& =
\begin{cases}
\frac{-1}{(k!)^2} & \mbox{if } a=2k; \\
0, & \mbox{if } a \mbox{ is odd.}
\end{cases}
\end{align}

We see that there is an extra factor of $(-1)$ in all coefficients. This means that the $(g,n)$-correlation functions of Norbury-Scott differ from the stationary Gromov-Witten invariants of $\CP1$ by the factor of $(-1)^n$. But this factor is exactly the difference we must have because Norbury and Scott using a different convention on the sign in the topological recursion, cf.~Remark~\ref{ref:conventions}. This completes the proof of Theorem~\ref{thm:N-S}.

\end{document}